\numberwithin{equation}{section}
\theoremstyle{plain}
\newtheorem{theorem}{Theorem}[section]
\newtheorem{proposition}[theorem]{Proposition}
\newtheorem{lemma}[theorem]{Lemma}
\newtheorem{corollary}[theorem]{Corollary}
\theoremstyle{definition}
\newtheorem{remark}[theorem]{Remark}
\newtheorem{assumption}{Assumption}
\def\^#1{\ifmmode {\mathaccent"705E #1} \else {\accent94 #1} \fi}
\def\~#1{\ifmmode {\mathaccent"707E #1} \else {\accent"7E #1} \fi}
\def\*#1{#1^\ast}
\edef\-#1{\noexpand\ifmmode {\noexpand\bar{#1}} \noexpand\else \-#1\noexpand\fi}
\def\>#1{\vec{#1}}
\def\.#1{\dot{#1}}
\def\atop{\@@atop}
\renewcommand{\leq}{\leqslant}
\renewcommand{\geq}{\geqslant}
\renewcommand{\phi}{\varphi}
\newcommand{\eps}{\varepsilon}
\renewcommand\section{\@startsection {section}{1}{\z@}%
{-3.5ex \@plus -1ex \@minus -.2ex}%
{1.3ex \@plus.2ex}%
{\center\small\sc\MakeTextUppercase}}
\def\subsection#1{\@startsection {subsection}{2}{0pt}%
{-3.5ex \@plus -1ex \@minus -.2ex}%
{1ex \@plus.2ex}%
{\bf\mathversion{bold}}{#1}}
\def\subsubsection#1{\@startsection{subsubsection}{3}{0pt}%
{\medskipamount}%
{-10pt}%
{\normalsize\itshape}{\kern-2.2ex. #1.}}
\def\blfootnote{\xdef\@thefnmark{}\@footnotetext}
\newcommand{\E}{\mathbb{E}} 
\newcommand{\p}{\mathbb{P}} 
\DeclareMathOperator{\mcE}{\mathcal{E}}
\DeclareMathOperator{\mcF}{\mathcal{F}}
\begin{document} 


\title{\sc\bf\large\MakeUppercase{
Can one hear the shape of a population history?}\footnote{Our title is inspired by the famous paper of Mark Kac~\cite{kac1966can}; analogously, we study the theoretical limits to inferring a population size history.}}
\author{
	Junhyong Kim
	\thanks{University of Pennsylvania; \texttt{junhyong@sas.upenn.edu}.}
	\and
	Elchanan Mossel
	\thanks{University of California, Berkeley; \texttt{mossel@stat.berkeley.edu}; supported by NSF grants DMS 1106999 and CCF 1320105 and by DOD ONR grant N000141110140.}
	\and
	Mikl\'os Z. R\'acz
	\thanks{University of California, Berkeley; \texttt{racz@stat.berkeley.edu}; supported by NSF grant DMS 1106999 and by DOD ONR grant N000141110140.}
	\and
	Nathan Ross
	\thanks{University of Melbourne; \texttt{nathan.ross@unimelb.edu.au}.}
}
\date{\today}
\maketitle


\begin{abstract}
Reconstructing past population size from present day genetic data is a major goal of population genetics. Recent empirical studies infer population size history using coalescent-based models applied to a small number of individuals. Here we provide tight bounds on the amount of exact coalescence time data needed to recover the population size history of a single, panmictic population at a certain level of accuracy. In practice, coalescence times
are estimated from sequence data and so our lower bounds should be taken as rather conservative.


\smallskip
\noindent {\bf Keywords:} population size; estimation; coalescent.
\end{abstract}


\section{Introduction}\label{sec:intro} 

Reconstructing the past size and structure of the population of a species is a major goal of population genetics
with applications in, for example, ecology, epidemiology~\cite{heled2008bayesian}, and paleoanthropology~\cite{Li2011}.
It is also important for understanding relationships between different evolutionary parameters, e.g.,
the dynamics of different parts of the genome or how demography affects selection~\cite{Li2012}.

Inference is based on sequence data from individuals sampled from the population under consideration.
Under a given population history, 
the coalescent is a model that provides likelihoods of observed genetic data and is one of the main tools used to infer population history. 
But the space of population histories typically considered is huge and so maximum likelihood estimation requires approximation techniques
\cite{Bhaskar2014, Excoffier2013, harris2013inferring, Li2011, Nielsen2000, palamara2012length, Sheehan2013} which 
lack theoretical guarantees; the same statement applies to Bayesian methods~\cite{Drummond2005, heled2008bayesian}. 
(These methods are discussed in greater detail in Section~\ref{sec:back} below.) 

Here we provide provable information-theoretic lower bounds on the amount of coalescence data needed to estimate, up to some specified accuracy, 
events in a population's past history (see Theorem~\ref{thm:main_bd} below). 
Our bounds are asymptotically tight as shown by analysis of a simple inference algorithm which recovers the history given slightly more data than required by the lower bounds.

Before stating our results in more detail, we provide a brief introduction to inference using the coalescent, as well as a summary of existing literature in this area.

\subsection{Inference using the coalescent}\label{sec:back}

Let $N(t)$ be the size of a single panmictic haploid population at time $t$ ``generations" in the past\footnote{Throughout the paper the unit of time is generations.}
 and call $N = \left\{N\left(t\right) \right\}_{t \geq 0}$ 
the \emph{shape of the population size history}, or simply the \emph{population shape}. 
Given $N$,
Kingman's coalescent (see~\cite{Tavare2004} for background) is a random genealogy on $n$ sampled individuals from the present day population. 
The basic description is that the rate of coalescence between any two individuals/lineages at time $t$ in the past is $1/N(t)$ 
 and so given $k$ lineages at time $t$, the rate of coalescence is $\binom{k}{2}/N(t)$. 
We focus primarily on the case where data comes from pairs of individuals, i.e., $n=2$, just as in, e.g.,~\cite{Li2011}.

The population shape $N = \left\{ N\left(t \right) \right\}_{t \geq 0}$ determines a distribution $\p_{N}$ over coalescent trees;
and in particular, $\p_{N}$ determines the distribution of coalescent trees of any finite number of individuals, at any number of independent loci. 
The first step to infer $N$ using the coalescent is to
ensure that the distribution over coalescent trees uniquely determines the shape of a population history, i.e., that $N \neq N'$ implies that $\p_{N} \neq \p_{N'}$. 
This is indeed true: if we know $\p_N$, then we also know the rate of coalescence of two arbitrary individuals at any time $t$, which is just $1/N(t)$. 
Thus with an infinite amount of coalescence time data, the population shape can be reconstructed.

Considering sequence data, the model assumes that for $n$ individuals in a population, each genomic site follows an $n$-coalescent tree. 
Two sites have the same coalescent tree if there is no recombination breakpoint between them. 
At each site, mutations occur on top of the trees according to a Poisson process with small mutation rate and so, in principle, likelihoods of statistics of sequence data can be derived. 
Unfortunately, recombination is a complicated process and even under simplifying assumptions, likelihood functions are typically intractable, both analytically and computationally. 
Thus inexact methods must be developed, which we now describe.

Given whole genome data, likelihoods of various population parameters can be estimated across the parameter space by MCMC~\cite{Nielsen2000, Excoffier2013}. 
Using a simplified model of recombination~\cite{McVean2005}, simpler likelihood functions arise; however, these must still be analyzed using approximation schemes~\cite{Li2011, Sheehan2013}. 
Sequence data can also be used to infer the lengths of nonrecombinant blocks~\cite{jasmine2014ibd}, the distribution of which can be used to infer various aspects of the population history~\cite{palamara2012length}. 

The problem simplifies when it is assumed that all loci in a given sequence are linked, 
that is, not separated by recombination events or in parts of the genome where no recombination occurs (such as mitochondrial DNA). 
In such cases, the coalescent trees at each of the sites are \emph{identical}. 
For such data, given the coalescent tree, the number of segregating sites (where mutations have occurred) follows a Poisson distribution and analytic (though intractable)
expressions for likelihoods can be derived.

If all sites are unlinked, one can use inference tools involving the population allele frequency spectrum~\cite{Bhaskar2014} or Bayesian approaches such as the ``Bayesian Skyline''~\cite{Drummond2005, heled2008bayesian} for both single and multi-locus data. 
Outside of the coalescent framework, the allele frequency spectrum and its diffusion approximation~\cite{gutenkunst2009inferring, Lukic2011} (which is derived from the underlying Wright-Fisher dynamics that also drive the coalescent) can also be used, though the same computational caveats as above apply. 
The allele frequency spectrum suffers from identifiability issues in general~\cite{myers2008can}, though not under biologically realistic assumptions~\cite{bhaskar2013identifiability}.

\subsection{Results overview and applications}\label{sec:low}

We provide lower bounds on the amount of \emph{exact} coalescence time data necessary to infer past population history events.
The assumption that our data are exact coalescence times is unrealistic but idealized: for a single, panmictic population, the rate of coalescence $t$ generations in the past determines the population size at that time and so the most direct route to estimating the population history is through the coalescence times. Since our lower bounds on 
the amount of samples are for idealized data, the bounds 
should also be taken to apply to methods which use sequence data (and should be considered as underestimates for such methods). 
In fact, all the previously mentioned coalescent-based methods
used to infer population history based on sequence data also infer the coalescence times along the way (usually implicitly).

The following theorem provides bounds on the probability of correctly distinguishing between two population histories that
differ only on an interval $(T, T+S)$ 
over which each is constant, given coalescence times between pairs of individuals at $L$ independent loci.
See Figure~\ref{fig:bottle} for an illustration of two such histories.

\begin{theorem}\label{thm:main_bd}
Let $a, b$, and  $S$ be positive constants and let $T \geq 0$. 
Consider the following hypothesis testing problem: under both hypotheses the population sizes are equal
in the intervals $[0,T)$ and $[T+S,\infty)$, given by some function $N(\cdot)$, 
but under $H_1$ the population size is constant $aN(0)=:aN_0$ in the interval $[T,T+S)$  while 
under $H_2$ the population size during the interval $[T,T+S)$ is constant $bN_0$.
If $L$ independent coalescence times 
are observed from 
either $H_1$ or $H_2$, with prior probability $1/2$,  
then the Bayes error rate for any classifier is at least $(1-\mcE)/2$, where  
\begin{align}
 \mcE^2 &\leq 2L \exp \left( - \int_0^{T} 1 / N\left( t \right) dt \right)\left(1-e^{-\frac{S}{2N_0} \frac{a+b}{ab}}\right)\frac{\left(\sqrt{a}-\sqrt{b} \right)^2}{a+b} \label{eq:main_bd1} \\
	 &\leq  2 L \exp \left( - \int_0^{T} 1 / N\left( t \right) dt \right) \min \left\{ \frac{S}{2N_0}, \frac{ab}{a+b} \right\} \frac{\left( \sqrt{a} - \sqrt{b} \right)^2}{ab}. \label{eq:main_bd2}
\end{align}
In other words, for any classification procedure, 
 the chance of correctly determining  whether the samples came from $H_1$ or $H_2$,  
is at most $(1+\mcE)/2$. 
\end{theorem}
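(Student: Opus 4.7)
The plan is to identify $\mathcal{E}$ with an upper bound on the total variation distance between the joint laws of the $L$ observations under $H_1$ and $H_2$, and then bound that quantity via Hellinger distance. Writing $\mu_i$ for the law of a single coalescence time under $H_i$, the Bayes error rate with equal priors is exactly $(1-d_{\mathrm{TV}}(\mu_1^{\otimes L},\mu_2^{\otimes L}))/2$, so it suffices to exhibit a quantity $\mathcal{E}$ with $d_{\mathrm{TV}}(\mu_1^{\otimes L},\mu_2^{\otimes L}) \leq \mathcal{E}$ satisfying the announced upper bound on $\mathcal{E}^2$. Combining the standard inequality $d_{\mathrm{TV}} \leq H$, Hellinger tensorization $\rho(\mu_1^{\otimes L},\mu_2^{\otimes L}) = \rho(\mu_1,\mu_2)^L$, and Bernoulli's inequality $(1-y)^L \geq 1-Ly$ for $y \in [0,1]$, I get
$$d_{\mathrm{TV}}(\mu_1^{\otimes L},\mu_2^{\otimes L})^2 \leq H^2(\mu_1^{\otimes L},\mu_2^{\otimes L}) \leq L \cdot H^2(\mu_1,\mu_2),$$
so I may take $\mathcal{E}^2 = L\,H^2(\mu_1,\mu_2)$ and the task reduces to computing the single-observation Hellinger affinity $\rho(\mu_1,\mu_2) = \int\sqrt{p_1 p_2}\,dt$.

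To compute $H^2(\mu_1,\mu_2) = 2 - 2\rho(\mu_1,\mu_2)$, I set $\alpha := \exp\bigl(-\int_0^T 1/N(s)\,ds\bigr)$ for the common survival probability past time $T$ and split the integral defining $\rho(\mu_1,\mu_2)$ into contributions from $[0,T)$, $[T,T+S)$, and $[T+S,\infty)$. On $[0,T)$ the two densities coincide and contribute $1-\alpha$. On $[T,T+S)$ they are scaled exponentials with rates $1/(aN_0)$ and $1/(bN_0)$, whose geometric mean integrates to $\tfrac{2\sqrt{ab}\,\alpha}{a+b}\bigl(1-e^{-S(a+b)/(2abN_0)}\bigr)$. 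On $[T+S,\infty)$ the common hazard-shape $N^{-1}(t)\exp\bigl(-\int_{T+S}^t N^{-1}(s)\,ds\bigr)$ integrates to $1$, while the distinct survival prefactors $e^{-S/(aN_0)}$ and $e^{-S/(bN_0)}$ combine via their geometric mean to $\alpha\,e^{-S(a+b)/(2abN_0)}$. Summing the three pieces, using $a+b-2\sqrt{ab}=(\sqrt{a}-\sqrt{b})^2$, and factoring produces
$$1 - \rho(\mu_1,\mu_2) = \alpha\,\bigl(1-e^{-S(a+b)/(2abN_0)}\bigr)\frac{(\sqrt{a}-\sqrt{b})^2}{a+b},$$
which combined with $\mathcal{E}^2 = L\cdot 2(1-\rho(\mu_1,\mu_2))$ yields \eqref{eq:main_bd1}. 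Inequality \eqref{eq:main_bd2} then follows by applying $1-e^{-x}\leq\min\{x,1\}$ with $x=S(a+b)/(2abN_0)$: pulling out the factor $1/(ab)$ rewrites the common term $(\sqrt{a}-\sqrt{b})^2/(a+b)$ as $(\sqrt{a}-\sqrt{b})^2/(ab)$ multiplied by $\min\{S/(2N_0),\,ab/(a+b)\}$.

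The computational heart of the argument is the three-piece Hellinger calculation, and the main thing to watch is the clean factorization $1 - r(1-x) - x = (1-x)(1-r)$ with $r = 2\sqrt{ab}/(a+b)$ and $x = e^{-S(a+b)/(2abN_0)}$, which is what turns the expression into $(\sqrt{a}-\sqrt{b})^2/(a+b)$. There is no deep obstacle — only careful bookkeeping — but it is essential that the survival probabilities at $T+S$ differ under the two hypotheses; it is precisely the geometric mean of those tail survival factors that cancels against the middle-interval contribution to reveal the squared-difference $(\sqrt{a}-\sqrt{b})^2$.
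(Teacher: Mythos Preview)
Your proposal is correct and follows essentially the same route as the paper: bound the Bayes-optimal success probability by total variation, pass to Hellinger via $d_{\mathrm{TV}}^2\le 2d_H^2$ (your unhalved $H^2=2(1-\rho)$ is the paper's $2d_H^2$), tensorize, and compute the single-sample affinity by splitting the integral over $[0,T)$, $[T,T+S)$, $[T+S,\infty)$, then apply $1-e^{-x}\le\min\{x,1\}$ for the second inequality. Your explicit identification of the factorization $1-r(1-x)-x=(1-x)(1-r)$ with $r=2\sqrt{ab}/(a+b)$ is exactly the ``simple calculation'' the paper leaves to the reader.
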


\begin{figure}[h!]
\begin{center}
\begin{tikzpicture}
 \begin{axis}[
 	xlabel=\tiny{Time (generations)}, 
 	ylabel=\tiny{Effective population size}, 
 	xmin=0, 
 	xmax=3, 
 	ymin=0,
 	ymax=4,
 	height=5cm, 
 	width=7cm,
 	y label style={at={(axis description cs:-0.01,0.5)},anchor=north},
 	xticklabels={, , T, T+S, },
 	yticklabels={,, {$b N_0$}, {$a N_0$}, {$N_0$}}
 ]
 \addplot[color=blue,domain=0:1,samples=100]
 {
 4.5*x^3-7*x^2+2*x+3
 };
 \addplot[color=blue] 
 coordinates {
 	(1,2)
 	(2,2)
 	};
 \addplot[color=blue,domain=2:3,samples=100]
 {
 .2*x^2-.1
 };
 \end{axis}
 \end{tikzpicture}
 \hspace{5mm}
 \begin{tikzpicture}
 \begin{axis}[
 	xlabel=\tiny{Time (generations)}, 
 	ylabel=\tiny{Effective population size}, 
 	xmin=0, 
 	xmax=3, 
 	ymin=0,
 	ymax=4,
 	height=5cm, 
 	width=7cm,
 	y label style={at={(axis description cs:-0.01,0.5)},anchor=north},
 	xticklabels={, , T, T+S, },
 	yticklabels={,, {$b N_0$}, {$a N_0$}, {$N_0$}}
 	]
 \addplot[color=blue,domain=0:1,samples=100]
 {
 4.5*x^3-7*x^2+2*x+3
 };
 \addplot[color=blue] 
 coordinates {
 	(1,1)
 	(2,1)
 	};
 \addplot[color=blue,domain=2:3,samples=100]
 {
 .2*x^2-.1
 };	
 \end{axis}
\end{tikzpicture}
\end{center}
\caption{An illustration of two population histories for which Theorem~\ref{thm:main_bd}
provides lower bounds on the amount of coalescence time data needed to distinguish between them.}
\label{fig:bottle}
\end{figure}
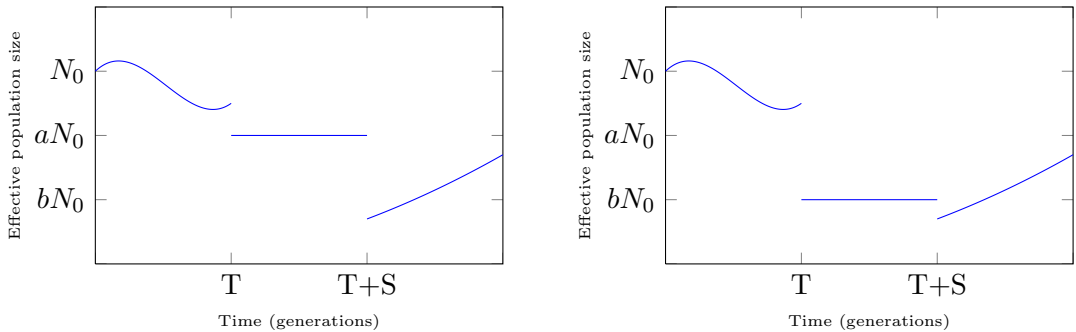

The main features of the bound of the theorem above are that if $L \ll \left( S/N_0 \right)^{-1}$, or if $L\ll 1/(\sqrt{a}-\sqrt{b})^2$, 
or if $T$ is large enough, 
then the chance of distinguishing between the two histories will be near $1/2$. 
Consequently, given $a,b,S$, and $T$, the theorem provides a lower bound on the number $L$ of independent coalescence times necessary in order to distinguish between the two histories with a given probability.

To understand the bound in more concrete settings and to
compare Theorem~\ref{thm:main_bd} to previous work, consider Li and Durbin~\cite{Li2011}, who apply the pairwise sequentially Markovian coalescent model (PSMC) to
the complete diploid genome sequences of seven individuals in order to infer human population size history,
with one of their main goals being to infer the timing of the out-of-Africa event which caused a bottleneck in East Asian and European populations. 
To validate their model, they apply PSMC to simulated data where the population histories consist of a sharp out-of-Africa bottleneck followed by a population expansion. 
They note that the simulations ``reveal a limitation of PSMC in recovering sudden changes in effective population size.'' 
We use Theorem~\ref{thm:main_bd} to quantitatively show that  \emph{every method} must suffer from this to a certain extent. 
%
%

Take the population history considered in~\cite[Fig.~2a]{Li2011}, reproduced in the left panel of Fig.~\ref{fig:LiDur} below. Here the present day effective population size is $N_0:=N(0) = 2.732 \times 10^4$; 
the effective population size is $N_0$ in the time interval $[0,2.732 \times 10^4)$ back in time (measured in years, assuming $25$ years per generation), 
it is $0.05\times N_0$ in the time interval $[2.732 \times 10^4, 1.0245 \times 10^5)$ back in time, 
it is $0.5 \times N_0$ in the time interval $[1.0245 \times 10^5, 3.415 \times 10^6)$ back in time, 
and it is $N_0$ in the time interval $[3.415 \times 10^6, \infty)$ back in time. 
We apply Theorem~\ref{thm:main_bd} to obtain bounds on the amount of data needed to estimate 
the timing of the bottleneck at approximately $100$ kyr to a given accuracy. 
We have $a = 0.05$, $b=0.5$, and $N_0 = 2.732 \times 10^4$. Assuming $25$ years per generation, we have $T = 0.15 N_0$ and $\int_0^{T} 1 / N\left( t \right) dt = \int_0^{0.04 N_0} 1/N_0 dt + \int_{0.04 N_0}^{0.15 N_0} 1/(0.05 N_0) dt = 2.24$. 

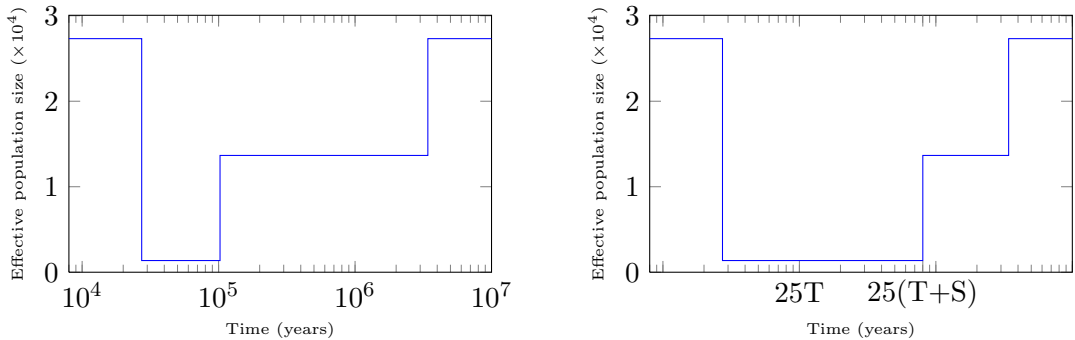
\begin{figure}[h!]
\begin{center}
\begin{tikzpicture}
 \begin{semilogxaxis}[
 	xlabel=\tiny{Time (years)}, 
 	ylabel=\tiny{Effective population size $(\times 10^4)$}, 
 	xmin=8000, 
 	xmax=10000000, 
 	height=5cm, 
 	width=7.2cm,
 	y label style={at={(axis description cs:0.05,0.5)},anchor=north},
 	ymin=0
 ]
 \addplot[color=blue] 
 coordinates {
 	(8000,2.732)
 	(27320,2.732)
 	(27320,2.732/20)
 	(102450,2.732/20)
 	(102450,2.732/2)
 	(3415000,2.732/2)
 	(3415000,2.732)
 	(10^7,2.732)
 };
 \end{semilogxaxis}
 \end{tikzpicture}
 \hspace{5mm}
 \begin{tikzpicture}
 \begin{semilogxaxis}[
 	xlabel=\tiny{Time (years)}, 
 	ylabel=\tiny{Effective population size $(\times 10^4)$}, 
 	xmin=8000, 
 	xmax=10000000, 
 	height=5cm, 
 	width=7.2cm,
 	y label style={at={(axis description cs:0.05,0.5)},anchor=north},
 	ymin=0,
 	xticklabels={, , 25T, , },
 	extra x ticks={802450},
 	extra x tick labels={25(T+S)},
 	extra x tick style={%
      text height=\heightof{0},
   },
 ]
 \addplot[color=blue] 
 coordinates {
 	(8000,2.732)
 	(27320,2.732)
 	(27320,2.732/20)
 	(802450,2.732/20)
 	(802450,2.732/2)
 	(3415000,2.732/2)
 	(3415000,2.732)
 	(10^7,2.732)
 };
 \end{semilogxaxis}
\end{tikzpicture}
\end{center}
\caption{The population histories compared in Table~\ref{table:bottle}. The history on the left is Fig.~2a in~\cite{Li2011} and 
the history on the right is a modified version. For each value of $S$, Theorem~\ref{thm:main_bd} provides lower bounds on the amount of coalescence time 
data needed to distinguish between the two histories.}
\label{fig:LiDur}
\end{figure}

Theorem~\ref{thm:main_bd} tells us that, 
given coalescence times from $L$ independent loci, 
in order to distinguish between the two histories considered in Figure~\ref{fig:LiDur} with probability at least $0.95$, 
it is necessary that $\mcE^2 \geq 0.81$, 
so using~\eqref{eq:main_bd1} and plugging in the numbers above, 
it is necessary that 
\begin{equation}\label{eq:LiDurbinBD}
 2L e^{-2.24} \left( 1 - e^{-\frac{S}{54640} \times \frac{0.55}{0.025}} \right) \frac{\left( \sqrt{0.5} - \sqrt{0.05} \right)^2}{0.55} \geq 0.81.
\end{equation}
From this we immediately see that there is no solution for $S$ when $L \leq 8$, 
i.e., when the number of independent loci is too small, 
the  length of the 95\% ``uncertainty interval''\footnote{For a statistical hypothesis $H_1: \theta=\theta_1$ on a real parameter $\theta$ and a given set of data,
we define the $95\%$ \emph{uncertainty interval} to be the set of
values $\theta_2$ such that  for any classification procedure, 
 the chance of correctly determining  whether the samples came from $H_1$ or $H_2: \theta=\theta_2$ is no greater than  $0.95$.} is infinite.
When $L \geq 9$,~\eqref{eq:LiDurbinBD} is equivalent to
\begin{equation}\label{eq:LiDurbinBD2}
 S \geq \frac{27320}{11}  \log \left( 1+ \frac{\frac{891 \times e^{2.24}}{4000 \times \left( \sqrt{0.5} - \sqrt{0.05} \right)^2}}{L - \frac{891 \times e^{2.24}}{4000 \times \left( \sqrt{0.5} - \sqrt{0.05} \right)^2}} \right),
\end{equation}
i.e., the length of the 95\% ``uncertainty interval'' is inversely proportional to $L$ for large $L$. 
Table~\ref{table:bottle} collects the numerical values of some of the lower bounds on the lengths of the 95\% uncertainty intervals given by~\eqref{eq:LiDurbinBD} and~\eqref{eq:LiDurbinBD2} 
(note that in~\eqref{eq:LiDurbinBD} and~\eqref{eq:LiDurbinBD2} the unit of $S$ is generations, while in Table~\ref{table:bottle} the unit of time is years, where we assume $25$ years per generation). 
These estimates are in line with the simulation results of~\cite{Li2011}, where in the PSMC reconstruction of the population history the sudden drop in population is spread out over several tens of thousands of years.

\begin{table}[h]
\begin{center}
\begin{tabular}{|  l  || c c c c c  |}
\hline 
 Number of loci & $\leq 8$ & 10 & 20 & 30 & 50 \\
 \hline
Lower bound on interval length (in years) &  $\infty$ & $1.3 \times 10^5$  & $3.6 \times 10^4$ & $2.1 \times 10^4$ & $1.2 \times 10^4$ \\
\hline
\end{tabular}
\caption{Lower bounds on the lengths of the 95\% uncertainty intervals for determining the timing of a bottleneck given a sample of $L$ independent loci in the scenario depicted in
Figure~\ref{fig:LiDur}.}
\label{table:bottle}
\end{center}
\end{table}


Similarly, Theorem~\ref{thm:main_bd} also provides bounds on the amount of coalescence data needed to estimate 
the time of the final jump in the population to $N_0$ in this same scenario. 
We may consider two population histories, one the same as Fig.~2a in~\cite{Li2011}, 
and the other a modified version where the final jump in the population to $N_0$ occurs at some time in the interval  $[1.0245\times 10^5,  3.415\times 10^6]$ years in the past; 
see Fig.~\ref{fig:LiDur2}.

\begin{figure}[h!]
\begin{center}
\begin{tikzpicture}
 \begin{semilogxaxis}[xlabel=\tiny{Time (years)}, ylabel=\tiny{Effective Population size $(\times 10^4)$}, xmin=8000, xmax=10000000, height=5cm, width=7.2cm,
 	y label style={at={(axis description cs:0.05,0.5)},anchor=north}, ymin=0,
 	xticklabels={ ,$10^4$ , $10^5$, ,  },
 	extra x ticks={3415000},
 	extra x tick labels={$25 (T+S)$},
 ]
 \addplot[color=blue] 
 coordinates {
 	(8000,2.732)
 	(27320,2.732)
 	(27320,2.732/20)
 	(102450,2.732/20)
 	(102450,2.732/2)
 	(3415000,2.732/2)
 	(3415000,2.732)
 	(10^7,2.732)
 };
 \end{semilogxaxis}
 \end{tikzpicture}
 \hspace{5mm}
 \begin{tikzpicture}
 \begin{semilogxaxis}[xlabel=\tiny{Time (years)}, ylabel=\tiny{Effective Population size $(\times 10^4)$}, xmin=8000, xmax=10000000, height=5cm, width=7.2cm,
 	y label style={at={(axis description cs:0.05,0.5)},anchor=north,}, ymin=0,
 	xticklabels={, $10^4$ , $10^5$, ,  },
 	extra x ticks={602450, 3415000},
 	extra x tick labels={25$T$, 25$(T+S)$},
 	extra x tick style={%
   },
 ]
 \addplot[color=blue] 
 coordinates {
 	(8000,2.732)
 	(27320,2.732)
 	(27320,2.732/20)
 	(102450,2.732/20)
 	(102450,2.732/2)
 	(602450,2.732/2)
 	(602450,2.732)
 	(10^7,2.732)
 };
 \end{semilogxaxis}
\end{tikzpicture}
\end{center}
\caption{The population histories compared in Table~\ref{table:bd4}. The history on the left is Fig.~2a in~\cite{Li2011} and 
the history on the right is a modified version. 
Here $25(T+S)=3.415\times10^6$ and for each value of $T$, Theorem~\ref{thm:main_bd} provides lower bounds on the amount of coalescence
data needed to distinguish between the two histories.}
\label{fig:LiDur2}
\end{figure}

Here we thus have $a=0.5$, $b=1$, $25(T+S)=3.415\times 10^6$, and $25T \in [1.0245\times 10^5,  3.415\times 10^6]$. For each such $T$ we have 
\begin{align*}
\int_0^T 1/N(t) dt & =\int_0^{.04 N_0}  1/N_0 dt +\int_{.04 N_0}^{.15 N_0}  1/(.05 N_0) dt + \int_{.15N_0}^{T} 1/(. 5 N_0) dt =1.94 + 2(T/N_0).
\end{align*}
Plugging these expressions 
into~\eqref{eq:main_bd2} of Theorem~\ref{thm:main_bd} and taking $ab/(a+b)$ in the minimum,
we find that, in order to recover the true population history with probability at least $0.95$ given $L$ independent samples, 
we must have
\begin{equation}\label{eq:lastjump}
 T \leq 13660 \log \left( L / 49.2 \right).
\end{equation}
Notice again that recovery with $95\%$ chance is impossible when $L \leq 49$, so a considerable amount of coalescence time data is required for accurate inference. 
When $L \geq 50$, Table~\ref{table:bd4} summarizes lower bounds on the length of the $95\%$ uncertainty intervals implied by~\eqref{eq:lastjump} (again converted to years).

\begin{table}[h]
\begin{center}
\begin{tabular}{|  l  || c c c c c  |}
\hline 
 Number of loci & $100$ & $200$ & $500$ & $10^3$ & $10^4$ \\
 \hline
Lower bound on interval length (in yr) &  $3.1 \times 10^6$ & $2.9 \times 10^6$ & $2.6 \times 10^6$ & $2.3 \times 10^6$ & $1.6 \times 10^6$ \\
\hline
\end{tabular}
\caption{Lower bounds on the lengths of the 95\% uncertainty intervals for determining the timing of the last population size change in the scenario depicted in
Figure~\ref{fig:LiDur2}, given a sample of $L$ independent loci.}
\label{table:bd4}
\end{center}
\end{table}

The PSMC reconstruction of the population history ends at approximately 5 Myr, so the lengths of their uncertainty intervals on the timing of the last population size change are unclear, but appear to be at least a few Myr. Our results are therefore in line with these simulation results, and show that no method can perform substantially better than PSMC.

\subsection{Organization of the paper}
The layout of the paper is as follows:
in the next section we describe a procedure that infers a population history given slightly more
data than required by the lower bounds implied by Theorem~\ref{thm:main_bd}. We prove our results on lower bounds, including Theorem~\ref{thm:main_bd}, in
Section~\ref{sec:lower}, and then prove results about the inference procedure in Section~\ref{sec:algo}. 
We support our results by simulations presented in Section~\ref{sec:sim}, and we end with a summarizing discussion section with some open problems. 

\section{A Simple Reconstruction Algorithm}\label{sec:results} 

To complement the results on lower bounds detailed above, 
we describe a simple estimation procedure (analyzed in Section~\ref{sec:algo}) 
that takes coalescence time data and
returns an estimate for the population shape. 
The analysis of this procedure shows that the amount of data it requires almost matches the lower bounds stated in our results above. 

The procedure takes i.i.d.\ pairwise coalescence times $\mathbf{t}^L = \left\{ t_1, \dots, t_L \right\}$
and returns a piecewise constant estimate $\widehat{N} = \left\{ \widehat{N} \left( t \right) \right\}_{t \geq 0}$ for the population shape. 
The procedure involves a single parameter, $\eps$, which controls the length of the time intervals where our estimate is constant, and which then also affects the accuracy of our estimate in each time interval. 
Assume that there are $N_0$ individuals initially, at time $0$. 
\begin{enumerate}
 \item Partition time backwards in time into intervals of length $\eps N_0$, i.e., let $I_1 = \left[ 0, \eps N_0 \right]$, $I_2 = \left[ \eps N_0, 2 \eps N_0 \right], \dots, I_K = \left[ \left( K - 1 \right) \eps N_0, K \eps N_0 \right]$. 
 ($K$ is the minimum integer such that the interval $[0, K \eps N_0]$ covers the data and we do not provide estimates past time $K \eps N_0$.)
 \item For $k = 1, \dots, K$, denote the fraction of data points lying in the time interval $I_k$ by
\[
 \widehat{X}_k := \frac{1}{L} \# \left\{ i : t_i \in I_k \right\},
\]
and furthermore let $\widehat{S}_0 = 0$ and $\widehat{S}_k = \sum_{i=1}^{k} \widehat{X}_i$, the fraction of data points lying in the time interval $\left[ 0, k \eps N_0 \right]$.
 \item Our estimate $\widehat{N}_k$ in the time interval $I_k$ is
\begin{equation}\label{eq:est}
 \widehat{N}_k := \frac{\eps N_0}{- \log \left( 1 - \frac{\widehat{X}_k}{1-\widehat{S}_{k-1}} \right)},
\end{equation}
provided that $\widehat{X}_k > 0$, i.e., we have at least one data point in the time interval $I_k$. If $\widehat{X}_k = 0$, then we do not give an estimate.
\end{enumerate}

\begin{remark}
The estimate~\eqref{eq:est} is motivated by the fact that
\[
 \frac{\p \left( t_1 \in I_k \right)}{\p \left( t_1 \notin \left[ 0, \left( k - 1 \right) \eps N_0 \right] \right)} = 1 - \exp \left( - \int_{\left( k -1 \right) \eps N_0}^{k\eps N_0} \frac{1}{N \left( t \right)} dt \right).
\]
\end{remark}
\begin{remark}

In Step~1 above, we partition time into intervals of equal length. 
This is done solely to make the subsequent analysis and discussion as simple as possible. 
Depending on the specific application, it might be of interest to consider other choices of partitions, for instance, choosing intervals whose lengths grow exponentially backwards in time. 
Our estimation procedure (and also the subsequent analysis) works in an analogous way: $\widehat{X}_k$ and $\widehat{S}_k$ can be defined in the same way in Step~2, and the only change in the estimate~\eqref{eq:est} is to replace $\eps N_0$ in the numerator of the fraction with the length of the appropriate interval, $\left| I_k \right|$.

\end{remark}
In order to state the properties of this procedure, define for $k \geq 1$ the  ``effective constant population size in the time interval $I_k$'' by 
\[
 \widetilde{N}_k := \frac{\eps N_0}{\int_{\left( k - 1 \right) \eps N_0}^{k\eps N_0} \frac{1}{N\left( t \right)} dt};
\]
the $\widetilde{N}_k $ give a natural piecewise constant approximation of the population shape $N(t)$ that is directly
comparable to the piecewise estimate  $\widehat{N}$.
Let 
\[
 E_k := \sup_{t \in I_k} \left| \log N\left( t \right) - \log \widehat{N}_k \right|
\]
be the absolute error of our estimate $\widehat{N}$ on a logarithmic scale for each time interval. 
When estimating the error $E_k$, there are two types of errors to consider. 
One is the inherent error coming from the fact that we are approximating the shape with a piecewise constant function; the other error comes from the finite sample size $L$. 
By the triangle inequality we can bound the error $E_k$ by the sum of these two errors:
\[
 E_k \leq E_{k,1} + E_{k,2},
\]
where
\[
 E_{k,1} := \sup_{t \in I_k} \left| \log N \left( t \right) - \log \widetilde{N}_k \right|
\]
is the error coming from approximating the shape in the time interval $I_k$ with a constant, and
\[
 E_{k,2} := \left| \log \widetilde{N}_k - \log \widehat{N}_k \right|
\]
is the error coming from the finite sample size. Ignoring the error $E_{k,1}$ for now, 
we can use concentration inequalities to derive the following finite sample estimate for the accuracy of our estimator: 
\begin{proposition}\label{prop:k_cond_intro}
Given that $\ell$ samples ``survived'' the first $k-1$ intervals, the probability that $\log \widetilde{N}_k$, is in the (random) interval
\begin{equation*}\label{eq:k_cond_int_intro}
 \left[ \log \left( \eps N_0 \right) - \log \left( - \log \left( \left( 1 - \frac{L}{\ell}\widehat{X}_k - c \right) \vee 0 \right) \right), \log \left( \eps N_0 \right) - \log \left( - \log \left( \left( 1 - \frac{L}{\ell}\widehat{X}_k + c \right) \wedge 1 \right) \right) \right]
\end{equation*}
is at least $1 - 2 \exp \left( - 2 c^2 \ell \right)$ for all $c \geq 0$.\footnote{Here and in the following we use the notation $a\vee b = \max \left\{ a, b \right\}$ and $a \wedge b = \min \left\{ a, b \right\}$.} 
\end{proposition}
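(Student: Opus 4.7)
The plan is to exploit the fact that, conditional on the number $\ell$ of samples surviving the first $k-1$ intervals, the count of these samples landing in $I_k$ is Binomial, and then apply Hoeffding's inequality. The key preliminary observation is that since $\p(t_i > s) = \exp\bigl(-\int_0^s 1/N(u)\,du\bigr)$ under the coalescent,
\be{
\p\bigl(t_i \in I_k \mid t_i > (k-1)\eps N_0\bigr) = 1 - \exp\!\left(-\int_{(k-1)\eps N_0}^{k\eps N_0} \frac{du}{N(u)}\right) = 1 - e^{-\eps N_0/\widetilde{N}_k} =: p_k,
}
by the very definition of $\widetilde{N}_k$. Inverting this relation yields $\log \widetilde{N}_k = \log(\eps N_0) - \log(-\log(1-p_k))$, so it will suffice to produce a concentration interval for $p_k$ and push it through this monotone transformation.

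Next I would observe that, by the i.i.d.\ structure of $t_1,\dots,t_L$, conditional on the event that exactly $\ell$ of them exceed $(k-1)\eps N_0$, the number landing in $I_k$---which is precisely $L\widehat{X}_k$---is distributed as $\Bin(\ell, p_k)$. Hoeffding's inequality applied to these $\ell$ Bernoulli$(p_k)$ trials then yields
\be{
\p\bigl(\bigl|\tfrac{L}{\ell}\widehat{X}_k - p_k\bigr| \geq c\bigr) \leq 2 e^{-2 c^2 \ell}
}
(conditional on there being $\ell$ survivors), so with probability at least $1-2e^{-2c^2\ell}$ the quantity $1 - p_k$ lies between $1 - \tfrac{L}{\ell}\widehat{X}_k - c$ and $1 - \tfrac{L}{\ell}\widehat{X}_k + c$.

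To conclude, I would clip these endpoints to $[0,1]$---which only enlarges the containing interval and hence preserves the high-probability event---and then apply the strictly decreasing map $u \mapsto -\log(-\log u)$ to both sides, which exchanges the roles of upper and lower endpoints and produces precisely the interval in the statement for $\log \widetilde{N}_k$. The only subtlety is bookkeeping at the boundary values $0$ and $1$, where the map diverges to $\pm\infty$; in those degenerate cases the corresponding endpoint of the resulting interval is infinite and the containment is trivial, which is exactly what the $\vee\,0$ and $\wedge\,1$ encode. The argument is essentially Hoeffding plus monotonicity, so I do not anticipate a substantive obstacle---the only care required is tracking the reversal of inequalities through the decreasing transformation and verifying that clipping the pre-image does not disturb the event.
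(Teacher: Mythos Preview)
Your approach is exactly the paper's: condition on $\ell$ survivors, recognize $L\widehat{X}_k$ as $\Bin(\ell,p_k)$ with $p_k = 1 - e^{-\eps N_0/\widetilde{N}_k}$, apply the Hoeffding/Chernoff bound, then push the concentration interval through the monotone bijection $u \mapsto \log(\eps N_0) - \log(-\log u)$.

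One small slip to correct when you write it out: the map $u \mapsto -\log(-\log u)$ on $(0,1)$ is strictly \emph{increasing}, not decreasing (as $u \uparrow 1$, $-\log u \downarrow 0$, so $\log(-\log u) \to -\infty$). Hence there is no exchange of endpoints: the lower bound $(1 - \tfrac{L}{\ell}\widehat{X}_k - c)\vee 0$ for $1-p_k$ maps to the \emph{left} endpoint of the stated interval, and the upper bound to the right endpoint, consistent with the proposition as written. This does not affect the argument, only the bookkeeping.
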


Note that the interval in the proposition contains the estimate $\log \widehat{N}_k$.

To understand in what sense Proposition~\ref{prop:k_cond_intro} and Theorem~\ref{thm:main_bd} are matching bounds, first
consider the following easy corollary of Theorem~\ref{thm:main_bd} that better matches the setting of Proposition~\ref{prop:k_cond_intro}.

\begin{theorem}\label{lem:bd4_intro}
Let $a, b$, and $S$ be positive constants and $T \geq 0$. Consider the following hypothesis testing problem: 
Under $H_1$ the population size is constant $aN(0)=:aN_0$ in the interval $[T,T+S)$  while 
under $H_2$ the population size during the interval $[T,T+S)$ is constant $bN_0$.
%
Assume the data are independent coalescence times  and let
$\ell$ be the number of  pairs that have not coalesced by time $T$. 
If the true history is given by either $H_1$ or $H_2$, each with prior probability $1/2$, 
then the Bayes error rate for any classifier is at least $(1 - \Delta)/2$, 
where $\Delta$ satisfies: 
\begin{equation*}\label{eq:bd4_intro}
 \Delta^2 \leq2\ell \frac{\left( \sqrt{b}-\sqrt{a}\right)^2}{a+b}. 
\end{equation*}
In other words, for any classification procedure, the chance of correctly determining whether the samples came from $H_1$ or $H_2$, is at most $(1+\Delta)/2$. 
\end{theorem}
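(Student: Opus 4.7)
The plan is to reduce the Bayes error bound to a Hellinger distance calculation. Since the two priors are equal, the Bayes-optimal error rate is $(1-d_{\mathrm{TV}}(\mathcal L_1,\mathcal L_2))/2$, where $\mathcal L_i$ denotes the law of the data under $H_i$, so it suffices to set $\Delta := d_{\mathrm{TV}}(\mathcal L_1,\mathcal L_2)$ and verify the quoted bound.

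First, I would reduce to the $\ell$ pairs not yet coalesced by time $T$. Because $N(\cdot)$ agrees on $[0,T)$ under both hypotheses, conditional on the identity of the surviving pairs, the joint law of the $L-\ell$ pre-$T$ coalescence times is the same under $H_1$ and $H_2$ and hence carries no information about the hypothesis. Therefore, conditional on $\ell$, the relevant data are $\ell$ i.i.d.\ copies of a single-pair coalescence time conditioned on non-coalescence before $T$; call this law $P_i$ under $H_i$.

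Next, I would apply the standard chain
\begin{equation*}
d_{\mathrm{TV}}(P_1^{\otimes\ell},P_2^{\otimes\ell})^2 \leq H^2(P_1^{\otimes\ell},P_2^{\otimes\ell}) = 2\bigl(1-BC(P_1,P_2)^\ell\bigr) \leq \ell\,H^2(P_1,P_2),
\end{equation*}
where $BC(P,Q)=\int\!\sqrt{dP\,dQ}$ is the Bhattacharyya coefficient and $H^2=2(1-BC)$ is the squared Hellinger distance. The first inequality is the Cauchy--Schwarz bound $d_{\mathrm{TV}}\leq H$, the equality is the tensorisation of $BC$, and the last inequality is Bernoulli's $(1-x)^\ell\geq 1-\ell x$ applied to $x=H^2(P_1,P_2)/2\in[0,1]$.

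Finally, I would evaluate $H^2(P_1,P_2)$ by splitting the Bhattacharyya integral at $T+S$. On $[T,T+S)$ the two densities are shifted exponentials with rates $1/(aN_0)$ and $1/(bN_0)$, and an elementary exponential integral yields $\tfrac{2\sqrt{ab}}{a+b}\bigl(1-e^{-S(a+b)/(2abN_0)}\bigr)$. On $[T+S,\infty)$ the two densities share the common shape dictated by $N$ beyond $T+S$, multiplied by the respective $[T,T+S)$-survival factors $e^{-S/(aN_0)}$ and $e^{-S/(bN_0)}$; their geometric mean $e^{-S(a+b)/(2abN_0)}$ pulls out of the integral and the remaining shape factor integrates to $1$. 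Summing, subtracting from $2$, and using $(a+b)-2\sqrt{ab}=(\sqrt a-\sqrt b)^2$ gives
\begin{equation*}
H^2(P_1,P_2) = \frac{2(\sqrt a-\sqrt b)^2}{a+b}\bigl(1-e^{-S(a+b)/(2abN_0)}\bigr) \leq \frac{2(\sqrt a-\sqrt b)^2}{a+b},
\end{equation*}
which, substituted into the previous display, delivers the claimed bound on $\Delta^2$. The only delicate point is to verify that the post-$(T+S)$ piece of the Bhattacharyya integral collapses to $e^{-S(a+b)/(2abN_0)}$; this is immediate once one notes that the two densities share the same shape beyond $T+S$ and only differ by the survival factors through $[T,T+S)$.
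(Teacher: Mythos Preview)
Your proposal is correct and follows essentially the same route as the paper. The paper does not give a separate proof of this result, describing it instead as an ``easy corollary'' of Theorem~\ref{thm:main_bd}; what you have written is precisely the natural way to read off that corollary: condition on the $\ell$ pairs surviving past $T$ (which removes the factor $\exp(-\int_0^T 1/N(t)\,dt)$ from the Hellinger computation), then repeat the $d_{TV}\!\to$ Hellinger $\to$ tensorisation chain and the same explicit density calculation used in the proof of Theorem~\ref{thm:main_bd}, finally dropping the factor $1-e^{-S(a+b)/(2abN_0)}\leq 1$. Your normalization $H^2=2(1-BC)$ differs from the paper's $d_H^2=1-BC$ by a factor of two, but your inequalities are internally consistent and land on the same bound.
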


Writing $b=a(1+\eta)$ for $\eta>0$, the bound of the theorem becomes $2\ell (1-2\sqrt{1+\eta}/(2+\eta))\leq \ell \eta^2/4$
and so we see that if $\eta \ll \ell^{-1/2}$, then \emph{no procedure} will distinguish between the two histories  given
by $H_1$ and $H_2$ with good probability.
On the other hand, Proposition~\ref{prop:k_cond_intro} implies that
for a fixed confidence $\alpha$,
\begin{equation*}
1 - 2 \exp \left( - 2 c^2 \ell \right)=\alpha,
\end{equation*}
and the constant $c$ is of order $\ell^{-1/2}$ as $\ell$ becomes large, and thus the width of the interval
in Proposition~\ref{prop:k_cond_intro} is of order $\log(1+C \ell^{-1/2})$ where $C$ is some constant. 
To summarize, if $\eta \gg \ell^{-1/2}$, then our method will distinguish between the histories with high probability; but if
$\eta \ll \ell^{-1/2}$, Theorem~\ref{lem:bd4_intro} shows that \emph{no procedure} will distinguish between the two histories with good probability.
On a conceptual level, this last statement is the main purpose of the paper:
a significant amount of data is needed to infer past population size,
especially in deep history where there is likely to be little coalescence information.

%

For illustration, we implement our estimation procedure on simulated data in Section~\ref{sec:sim}, where we find a good general performance, matching our theoretical results.

%

\subsection{Related theoretical work}\label{sec:related} 

Our reconstruction algorithm is a special case of the following problem: given $n$ i.i.d.\ copies of the \emph{first point} of a Poisson point process on $[0,\infty)$ with intensity $\phi(t)$,
what is a good estimate of $\phi(t)^{-1}$?
Poisson process intensity estimation has a large literature, see for example \cite{Birge2007, Reynaud-Bouret2003, Willett2007} and references therein, but the (natural) data assumed in this area is one realization of the point process,
or the point process observed up to some fixed time, or i.i.d.\ copies of such data, which does not fit our framework.

For another perspective to this question, define the hazard rate for a positive random variable $X$ with density $f$ and distribution function $F$ to be
\begin{equation}
-\frac{d}{dt} \log(1-F(t))=f(t)/(1-F(t)). \label{222}
\end{equation}
A simple calculation shows that the time of the first point of a Poisson process with intensity $\phi(t)$ has the same distribution as a positive random variable 
with hazard rate $\phi(t)$. Due largely to their importance in applications in, e.g., insurance, medicine, and reliability theory \cite[Section 1.1]{Lawless2003}, hazard rate estimation is well studied;
some seminal papers are~\cite{Rice1976, Sethuraman1981, Yandell1983} and see the recent~\cite{Cheng2006} and references there. 
Without embellishments specific to lifetime data (such as censoring where some lifetimes are only known to be at least some value), the main technique to estimating \eqref{222} (which also applies to its inverse) is 
to adapt estimators of $f$ and $F$. 

Indeed, our reconstruction algorithm is essentially an adaptation of the histogram estimate of the density and distribution function to our setting. 
Other popular density estimation techniques such as those in the introduction of~\cite{Silverman1986} can be adapted to our setting through the use of~\eqref{222}; for example
see~\cite{Wang2005} for a survey of kernel smoothing methods for hazard function estimation.
Our particular estimation procedure was chosen due to its simplicity and explicitness; in particular, we mention two points. 
The first is that we desire results like Proposition~\ref{prop:k_cond_intro}  with explicit non-asymptotic confidence intervals. Asymptotic confidence intervals can be 
obtained and used as estimates for smoothed density estimators, but with error depending on unknown quantities related to the underlying density which 
can lead to poor coverage accuracy~\cite{Hall1992}.
Secondly, smoothed density estimators have improved performance only when the underlying density is itself smooth (expressed as differentiability and continuity conditions). A major
purpose of estimating past population size is to discover drastic changes in population size such as bottlenecks~\cite{harris2013inferring,Li2011,palamara2012length,Sheehan2013},
when it is not clear such smoothness assumptions are appropriate. 
\section{Proof of Lower Bounds}\label{sec:lower} 

In this section we prove Theorem~\ref{thm:main_bd}, as well as
derive some other lower bounds for the amount of data needed for a given accuracy of estimating the population shape. 
This is done by formulating hypothesis tests deciding between two population shapes, and proving upper bounds on the probability of correctly inferring the population shape.

\subsection{Background on probability metrics}\label{sec:prob_metrics} 

We first recall a few metrics between probability distributions (see~\cite{gibbs2002choosing} for a survey). 
Let $P$ and $Q$ be two probability measures that are absolutely continuous with respect to a third probability measure $\lambda$. 
Write $f_P = \frac{dP}{d\lambda}$ and $f_Q = \frac{dQ}{d\lambda}$ for the respective Radon-Nikodym derivatives. 
The square of the Hellinger distance between $P$ and $Q$ is then defined as
\[
 d_H^2 \left( P, Q \right) := \frac{1}{2} \int \left( \sqrt{f_P} - \sqrt{f_Q} \right)^2 d\lambda.
\]
The definition does not depend on the choice of $\lambda$. A nice property of the Hellinger distance is that for product measures $P = P_1 \times P_2$, $Q = Q_1 \times Q_2$, we have that
\[
 1 - d_H^2 \left( P, Q \right) = \left( 1 - d_H^2 \left( P_1, Q_1 \right) \right) \left( 1 - d_H^2 \left( P_2, Q_2 \right) \right),
\]
which immediately implies that
\[
 d_H^2 \left( P, Q \right) \leq d_H^2 \left( P_1, Q_1 \right) + d_H^2 \left( P_2, Q_2 \right).
\]
Another commonly used metric is the total variation distance:
\[
 d_{TV} \left( P,Q \right) := \sup_{A \in \mcF} \left| P \left( A \right) - Q \left( A \right) \right|,
\]
or, equivalently:
\[
 d_{TV} \left( P,Q \right) = \frac{1}{2} \int \left| f_P - f_Q \right| d\lambda.
\]
We use the following well-known fact:
\begin{lemma} With the notation above we have
 \[
  d_{TV} \leq \sqrt{2} d_H.
 \]
\end{lemma}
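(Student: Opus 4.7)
The plan is to prove the inequality via the standard factorization $f_P - f_Q = (\sqrt{f_P} - \sqrt{f_Q})(\sqrt{f_P} + \sqrt{f_Q})$ followed by Cauchy--Schwarz. This is a short, elementary argument and the main work is just bookkeeping of constants.

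First, starting from the integral representation of the total variation distance, I would write
\[
 2\, d_{TV}(P,Q) = \int \left| \sqrt{f_P} - \sqrt{f_Q} \right| \cdot \left| \sqrt{f_P} + \sqrt{f_Q} \right| \, d\lambda,
\]
and then apply the Cauchy--Schwarz inequality to bound this by
\[
 \left( \int \left( \sqrt{f_P} - \sqrt{f_Q} \right)^{2} d\lambda \right)^{1/2} \left( \int \left( \sqrt{f_P} + \sqrt{f_Q} \right)^{2} d\lambda \right)^{1/2}.
\]
The first factor is exactly $\sqrt{2}\, d_H(P,Q)$ by the definition of the Hellinger distance.

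Next I would control the second factor by expanding the square and using the identity
\[
 \left( \sqrt{f_P} + \sqrt{f_Q} \right)^2 + \left( \sqrt{f_P} - \sqrt{f_Q} \right)^2 = 2(f_P + f_Q),
\]
so that $(\sqrt{f_P} + \sqrt{f_Q})^2 \leq 2(f_P + f_Q)$, and integrating gives
\[
 \int \left( \sqrt{f_P} + \sqrt{f_Q} \right)^2 d\lambda \;\leq\; 2\int (f_P + f_Q)\, d\lambda \;=\; 4,
\]
since $P$ and $Q$ are probability measures. Combining these bounds yields $2\, d_{TV} \leq \sqrt{2}\, d_H \cdot 2$, which rearranges to the claimed inequality.

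There is no real obstacle in this proof; the only thing to be careful about is the choice of dominating measure $\lambda$, but since the definition of $d_H$ (and analogously $d_{TV}$) is independent of this choice, the argument proceeds uniformly. The key insight is simply the elementary algebraic identity decomposing $|f_P - f_Q|$ into a product whose two factors are naturally controlled by $d_H$ and the total mass, respectively.
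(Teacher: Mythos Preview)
Your proof is correct and follows exactly the same approach as the paper's: the factorization $f_P - f_Q = (\sqrt{f_P} - \sqrt{f_Q})(\sqrt{f_P} + \sqrt{f_Q})$, Cauchy--Schwarz, and the bound $(\sqrt{f_P} + \sqrt{f_Q})^2 \leq 2(f_P + f_Q)$. You have simply written out the constant-tracking in more detail than the paper does.
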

\begin{proof}
 This follows from the identity $f_P - f_Q = \left( \sqrt{f_P} - \sqrt{f_Q} \right) \left( \sqrt{f_P} + \sqrt{f_Q} \right)$, the Cauchy-Schwarz inequality, and the inequality $\left( \sqrt{f_P} + \sqrt{f_Q} \right)^2 \leq 2 \left( f_P + f_Q \right)$.
\end{proof}

\subsection{A lower bound on the amount of data needed to recover a constant  history}\label{sec:con_hist}

We start in a simpler setting than Theorem~\ref{thm:main_bd} where 
we are trying to differentiate with good probability between two populations of \emph{constant} size.
For this simple setup we assume our data 
are $L$ i.i.d.\ copies of coalescent trees on $n$ individuals from a constant population, and we want to 
estimate the size of the population. 
We derive lower bounds
on the amount of data needed for recovery.

\begin{theorem}\label{lem:bdm_intro}
Consider the following hypothesis testing problem: 
$H_1$ states that the population size during the interval $[0 ,\infty)$ is constant $N$, while 
$H_2$ states that the population size during the interval $[0,\infty)$ is the constant $(1+\eta)N$, where $\eta > 0$ is fixed.
If $L$ i.i.d.\ coalescent trees on $n$ individuals are observed from
either $H_1$ or $H_2$, each with prior probability $1/2$, 
then the Bayes error rate for any classifier is at least $(1-\Upsilon)/2$, 
where $\Upsilon$ satisfies: 
\begin{equation*}
\Upsilon^2 \leq 2L \left(1-\left(\frac{2\sqrt{1+\eta}}{2+\eta}\right)^{n-1}\right) \leq  \frac{L(n-1)\eta^2}{4}.
\end{equation*}
In other words, for any classification procedure, the chance of correctly determining whether the samples come from $H_1$ or $H_2$, is at most  $(1+\Upsilon)/2$.
\end{theorem}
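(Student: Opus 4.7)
The plan is to run the Hellinger--total variation machinery on the joint law of the $L$ coalescent trees and reduce the problem to the single comparison of two exponential distributions whose rates differ by a factor $1+\eta$.

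First I would translate the Bayes error statement into a bound on total variation: for two equally weighted hypotheses with data distributions $P_1^{\otimes L}$ and $P_2^{\otimes L}$, the Bayes error rate equals $\tfrac12(1 - d_{TV}(P_1^{\otimes L}, P_2^{\otimes L}))$, so it suffices to prove $d_{TV}(P_1^{\otimes L}, P_2^{\otimes L}) \leq \Upsilon$. Using the lemma $d_{TV} \leq \sqrt{2}\, d_H$ recalled in Section~\ref{sec:prob_metrics}, it is enough to prove $d_H^2(P_1^{\otimes L}, P_2^{\otimes L}) \leq \Upsilon^2/2$. By the tensorization identity for squared Hellinger distance stated in that same subsection, together with the inequality $1-(1-x)^L \leq Lx$, this reduces to bounding $d_H^2(P_1, P_2)$, where now $P_i$ is the distribution of a single $n$-coalescent tree under hypothesis $H_i$.

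Next, I would decompose a single coalescent tree into its topology and its vector of successive coalescence waiting times $(T_n, T_{n-1}, \dots, T_2)$. Under Kingman's coalescent with constant population size $cN$, the topology is uniform and independent of the times, and $T_k \sim \mathrm{Exp}\!\left(\binom{k}{2}/(cN)\right)$, with all $T_k$'s independent. Since the topology distribution does not depend on the population size, the Hellinger distance on trees equals the Hellinger distance on the time vectors. Applying the product identity once more,
\[
1 - d_H^2(P_1, P_2) \;=\; \prod_{k=2}^{n}\bigl(1 - d_H^2(\mathrm{Exp}(\lambda_{k,1}), \mathrm{Exp}(\lambda_{k,2}))\bigr),
\]
with $\lambda_{k,1} = \binom{k}{2}/N$ and $\lambda_{k,2} = \binom{k}{2}/((1+\eta)N)$.

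The key computation is the classical formula $1 - d_H^2(\mathrm{Exp}(\lambda_1), \mathrm{Exp}(\lambda_2)) = 2\sqrt{\lambda_1\lambda_2}/(\lambda_1+\lambda_2)$, obtained by a direct integration of $\sqrt{\lambda_1 e^{-\lambda_1 t}\,\lambda_2 e^{-\lambda_2 t}}$. The crucial (and lucky) observation is that this ratio is scale-invariant in $(\lambda_1, \lambda_2)$: the factor $\binom{k}{2}/N$ cancels, leaving $2\sqrt{1+\eta}/(2+\eta)$ for every $k$. Since there are $n-1$ independent coalescence waiting times, I obtain
\[
1 - d_H^2(P_1, P_2) \;=\; \Bigl(\tfrac{2\sqrt{1+\eta}}{2+\eta}\Bigr)^{\!n-1},
\]
and combining with the tensorization and $d_{TV}^2 \leq 2d_H^2$ yields the first inequality in the statement.

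For the second inequality I would use Bernoulli's inequality $1 - r^{n-1} \leq (n-1)(1-r)$ with $r = 2\sqrt{1+\eta}/(2+\eta)$, reducing the task to showing $1 - r \leq \eta^2/8$. Setting $u = \sqrt{1+\eta}$ reveals the algebraic identity $(2+\eta) - 2\sqrt{1+\eta} = (u-1)^2$, so $1 - r = (\sqrt{1+\eta}-1)^2/(2+\eta)$; then $\sqrt{1+\eta}-1 = \eta/(\sqrt{1+\eta}+1) \leq \eta/2$ gives the claim. The main conceptual obstacle is really the topology/time decomposition --- making sure that averaging over the common (size-independent) topology distribution does not affect the Hellinger distance between the full tree laws; everything else is essentially a tensorization argument coupled with the scale invariance of the exponential Hellinger formula.
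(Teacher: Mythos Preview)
Your proof is correct and follows essentially the same route as the paper: reduce the Bayes error to total variation, bound by Hellinger, tensorize over the $L$ samples, and compute the single-tree Hellinger distance exactly to get $1-\bigl(2\sqrt{1+\eta}/(2+\eta)\bigr)^{n-1}$, then bound this by $(n-1)\eta^2/8$. The only cosmetic difference is that the paper works with the vector of ordered coalescence times as a sufficient statistic and integrates the joint density directly, whereas you factor through the independent inter-coalescence waiting times and invoke the exponential Hellinger formula together with its scale invariance; the two computations are equivalent and yield the same closed form.
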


The interpretation of the theorem is that if $\eta \ll (n L)^{-1/2}$, then \emph{no procedure} will distinguish between the two histories  given
by $H_1$ and $H_2$ with good probability. 
In other words, we need $L = \Omega \left( 1 / \left( n  \eta^2 \right) \right)$ samples to differentiate between the two histories $N_1$ and $N_2$.\footnote{We use the standard asymptotic notation $\Omega$, which means ``at least on the order of''. Formally, if $a_n$ and $b_n$ are two sequences such that there exists a positive constant $c$ and an integer $n_0$ such that for every $n \geq n_0$, $a_n \geq c \times b_n$, then $a_n = \Omega \left( b_n \right)$ as $n \to \infty$. 
Similarly, if $f$ and $g$ are two functions such that there exist positive constants $c$ and $x_0$ such that for every $x \in \left( 0, x_0 \right)$, $f \left( x \right) \geq c \times g \left( x \right)$, then $f\left( x \right) = \Omega \left( g \left( x \right) \right)$ as $x \searrow 0$. 
Equivalently, $f\left( x \right) = \Omega \left( g \left( x \right) \right)$ if and only if $g\left( x \right) = O \left( f \left( x \right) \right)$.}
We reiterate that these bounds hold knowing exact rather than estimated coalescence times,
and so should be considered as underestimates in more realistic data settings.  

We set up for the proof of Theorem~\ref{lem:bdm_intro}; the same paradigm will be used to prove Theorem~\ref{thm:main_bd}. 
Consider the following hypothesis testing problem. 
Let $\eta > 0$, and let $N_1 \left( \cdot \right) \equiv N$ and $N_2 \left( \cdot \right) \equiv \left( 1 + \eta \right) N$ be two population size histories. 
Let $\kappa$ be uniform in $\left\{ 1, 2 \right\}$, and, given $\kappa$, let $\mathbf{R}^{\kappa, L} = \left\{ R_1^\kappa, \dots, R_L^\kappa \right\}$ be a collection of
$L$ i.i.d.\ coalescence trees on $n$ individuals drawn from the distribution induced by the population size history $N_{\kappa}$. The problem is to infer $\kappa$ from $\mathbf{R}^{\kappa,L}$.

The probability of correctly inferring $\kappa$ using the optimal reconstruction strategy is clearly at least $1/2$; denote this probability by $\left( 1 + \Upsilon \right) / 2$ (here $\Upsilon = \Upsilon \left( L,n,N, \eta \right)$).  
The reconstruction method which gives the largest probability of correctly inferring $\kappa$ is maximum likelihood: let $\widehat{\kappa} = 1$ if $\p \left( \kappa = 1 \, \middle| \, \mathbf{R}^{\kappa, L} \right) \geq \p \left( \kappa = 2 \, \middle| \, \mathbf{R}^{\kappa, L} \right)$ and $\widehat{\kappa} = 2$ otherwise. Then we have
\[
 \Upsilon = \p \left( \widehat{\kappa} = \kappa \right) - \p \left( \widehat{\kappa} \neq \kappa \right) = d_{TV} \left( \mathbf{R}^{1, L}, \mathbf{R}^{2, L} \right).
\]

\begin{proof}[Proof of Theorem~\ref{lem:bdm_intro}]
By the facts in Section~\ref{sec:prob_metrics} we have
\begin{equation}\label{2201}
 \Upsilon(L,n,N,\eta)^2 = d_{TV}^2\left( \mathbf{R}^{1,L}, \mathbf{R}^{2,L}\right)\leq 2 d_{H}^2\left( \mathbf{R}^{1,L}, \mathbf{R}^{2,L}\right)\leq  2 L d_H^2 \left( R_1^1, R_1^2 \right).
\end{equation}
Since the increasing sequence of times of coalescence of the trees $R_1^i$, denoted by $\mathbf{s}^i=(s_1^i, \ldots, s_{n-1}^i)$,  are sufficient statistics 
for $R_1^i$, we have
$d_H^2 \left( R_1^1, R_1^2 \right)=d_H^2(\mathbf{s}^1,\mathbf{s}^2)$.  We can directly compute the density $f_i(\mathbf{x})$ of $\mathbf{s}^i$ as
\[
f_i(\mathbf{x})=\prod_{j=1}^{n-1} \exp\left\{-\binom{n-j+1}{2}(x_j-x_{j-1})/N_i \right\}\frac{\binom{n-j+1}{2}}{N_i} \mathbf{1}_{\left\{ 0 < x_1 < \dots < x_{n-1} \right\}},
\]
where we have set $N_1:=N$, $N_2:=(1+\eta)N$, and $x_0=0$.
Using these densities in the definition of the Hellinger distance and noting especially that since $f_i$ is a density, 
we have for any $\alpha>0$ that
\[
\int_{0<x_1<\ldots< x_{n-1}} \prod_{j=1}^{n-1} \exp\left\{-\alpha \binom{n-j+1}{2}(x_j-x_{j-1}) \right\}\binom{n-j+1}{2}d\mathbf{x}=1/\alpha^{n-1},
\]
a calculation shows that
\begin{equation}\label{eq:dH1}
 d_H^2 \left( s_1^1, s_1^2 \right) = 1 -  \left(\frac{2 \sqrt{1+\eta}}{2 + \eta}\right)^{n-1}.
\end{equation}
Plugging this into~\eqref{2201} yields the first bound of the result. When $\eta  > 0$, we can upper bound the right hand side of~\eqref{eq:dH1} by $(n-1)\eta^2 / 8$
 to get the simpler bound $\Upsilon^2 \leq L (n-1) \eta^2 / 4$.
\end{proof}

\subsection{Proof of Theorem~\ref{thm:main_bd}}\label{sec:main_bd} 

We prove Theorem~\ref{thm:main_bd} using the same strategy as that of Section~\ref{sec:con_hist}.
For $i=1,2$, let $N_i(\cdot)$ be the history corresponding to hypothesis $H_i$. 
Let $\kappa$ be uniform in $\left\{ 1, 2 \right\}$, and, given $\kappa$, let $\mathbf{t}^{\kappa, L} = \left\{ t_1^\kappa, \dots, t_L^\kappa \right\}$ be a collection of
$L$ i.i.d.\ coalescence times of pairs of individuals
drawn from the distribution induced by the population size history $N_{\kappa}$. The problem is to infer $\kappa$ from $\mathbf{t}^{\kappa,L}$.

\begin{proof}[Proof of Theorem~\ref{thm:main_bd}]
As above, the chance that we infer $\kappa$ correctly from $\mathbf{t}^{\kappa,L}$ is bounded above by 
$(1+\mcE(L,a,b,T,S))/2$ where 
\begin{equation}
\mcE(L,a,b,T,S)^2=d_{TV}^2\left( \mathbf{t}^{1,L}, \mathbf{t}^{2,L}\right)\leq  2 L d_H^2 \left( t_1^1, t_1^2 \right). \label{2200}
\end{equation}
Writing $a_1:=a$ and $a_2:=b$ to shorten formulas, a straightforward calculation shows that the density of $t_1^i$ is 
\[
f_i(x)=
\begin{cases}
 \exp\left(-\int_0^x\frac{1}{N(s)} ds\right) \frac{1}{N(x)},  & x < T,\\
  \exp\left(-\int_0^T\frac{1}{N(s)} ds\right) \exp\left(- \frac{x-T}{a_i N_0}\right) \frac{1}{a_i N_0}, & T\leq x<T+S,\\
  \exp\left(-\int_0^T\frac{1}{N(s)} ds\right)\exp\left(- \frac{S}{a_i N_0}\right)   \exp\left(-\int_{T+S}^x\frac{1}{N(s)} ds\right) \frac{1}{N(x)}, & T+S \leq x.
\end{cases}
\]
Using these densities in the definition of the Hellinger distance, we find after some simple calculations that
\[
d_H^2 \left( t_1^1, t_1^2 \right) = 
 \exp \left( - \int_0^{T} 1 / N\left( t \right) dt \right)\left(1-e^{-\frac{S}{2N_0} \frac{a+b}{ab}}\right)\frac{\left(\sqrt{a}-\sqrt{b} \right)^2}{a+b}.
 \]
Finally, using the inequality $1 - e^{-x} \leq \min \left\{x, 1 \right\}$, simplifying, and plugging the result into~\eqref{2200} implies the bound of the theorem.
\end{proof}

\section{Estimating the population shape}\label{sec:algo} 

Recall our setting of the estimation procedure for the population shape $N = \left\{ N\left(t \right) \right\}_{t \geq 0}$ from the i.i.d.\ coalescence times $\mathbf{t}^L = \left\{ t_1, \dots, t_L \right\}$.
In this section we analyze our piecewise constant population shape estimator $\widehat{N} = \left\{ \widehat{N} \left( t \right) \right\}_{t \geq 0}$ introduced in Section~\ref{sec:results}. 

We consider the absolute error of our estimate $\widehat{N}$ on a logarithmic scale for each time interval, i.e., for $k\geq 1$ we consider
\[
 E_k := \sup_{t \in I_k} \left| \log N\left( t \right) - \log \widehat{N}_k \right|.
\]
Recall also that $E_k \leq E_{k,1} + E_{k,2}$, where $E_{k,1}$ and $E_{k,2}$ are also both defined in Section~\ref{sec:results}.

To bound the error $E_{k,1}$ it is necessary to make an additional assumption on the population shape. 
We introduce an additional parameter, $\delta$, which controls how much the population size can vary within a time interval, and we make the following assumption.

\begin{assumption}\label{ass:delta}
 We assume that in each time interval the population size can increase by a factor of at most $e^{\delta \eps}$, and can decrease by a factor of at most $e^{- \delta \eps}$. 
\end{assumption}

Using this assumption, it is simple to bound the first type of error.
\begin{lemma}
 Given Assumption~1, we have that $E_{k,1} \leq 2 \delta \eps$.
\end{lemma}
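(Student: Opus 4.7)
The plan is to bound $|\log N(t) - \log \widetilde N_k|$ pointwise on $I_k$ by exploiting two ingredients: the fact that $\widetilde N_k$ is a harmonic mean of $N$ over $I_k$, and a uniform multiplicative control on $N$ within $I_k$ furnished by Assumption~1.

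First I would note that by the definition
$$\widetilde N_k \;=\; \frac{\eps N_0}{\int_{(k-1)\eps N_0}^{k\eps N_0} N(s)^{-1}\,ds}$$
is the harmonic mean of $N$ over $I_k$, and so it satisfies the standard sandwich bound
$$\inf_{s \in I_k} N(s) \;\leq\; \widetilde N_k \;\leq\; \sup_{s \in I_k} N(s).$$
Next I would apply Assumption~1, under which all values of $N$ on $I_k$ lie within a multiplicative factor of $e^{\pm\delta\eps}$ of a common baseline value (for instance $N((k-1)\eps N_0)$), yielding
$$\frac{\sup_{s \in I_k} N(s)}{\inf_{s \in I_k} N(s)} \;\leq\; e^{2\delta\eps}.$$

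Combining these two facts, for every $t \in I_k$ both $N(t)/\widetilde N_k$ and $\widetilde N_k/N(t)$ are at most $\sup_{s\in I_k}N(s)/\inf_{s\in I_k}N(s)\leq e^{2\delta\eps}$, hence $|\log N(t) - \log \widetilde N_k| \leq 2\delta\eps$. Taking the supremum over $t \in I_k$ gives $E_{k,1} \leq 2\delta\eps$, as claimed.

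There is essentially no obstacle here: the argument is a one-line consequence of the harmonic-mean sandwich together with Assumption~1. The only mild subtlety is in reading Assumption~1 as a uniform $e^{\pm\delta\eps}$ factor bound relative to a baseline value on each interval, which is exactly what produces the factor of $2$ (rather than $1$) in the final estimate, since both the $\sup$ and $\inf$ may deviate from the baseline in opposite directions.
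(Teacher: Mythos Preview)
Your proposal is correct and essentially the same as the paper's proof: the paper shows that both $\log N(t)$ and $\log\widetilde N_k$ lie in the interval $[\log N((k-1)\eps N_0)-\delta\eps,\ \log N((k-1)\eps N_0)+\delta\eps]$ and concludes, which is just your harmonic-mean sandwich combined with the baseline bound phrased pointwise rather than via $\sup/\inf$.
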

\begin{proof}
 Assumption~1 implies that 
\[
 \log N \left( \left( k - 1 \right) \eps N_0 \right) - \delta \eps \leq \log N \left( t \right) \leq  \log N \left( \left( k - 1 \right) \eps N_0 \right) + \delta \eps
\]
for all $t \in I_k$, and consequently also that
\[
 \log N \left( \left( k - 1 \right) \eps N_0 \right) - \delta \eps \leq \log \widetilde{N}_k \leq  \log N \left( \left( k - 1 \right) \eps N_0 \right) + \delta \eps.
\]
These inequalities then imply that $E_{k,1} \leq 2 \delta \eps$.
\end{proof}

To estimate the second type of error, $E_{k,2}$, it is not necessary to make any assumptions. We use concentration results for sums of i.i.d.\ random variables, and, in particular, we use the following simple corollary of the Chernoff bound.
\begin{theorem}\label{thm:Chernoff}
 Let $Y_1, \dots, Y_n$ be i.i.d.\ Bernoulli($p$) random variables, and let $Y = \sum_{i=1}^n Y_i$. Then for any $\lambda > 0$ we have
\begin{equation}\label{eq:Ch1}
 \left.
\begin{aligned}
 &\p \left( Y \leq np - \lambda \right)\\
 &\p \left( Y \geq np + \lambda \right)
\end{aligned}
 \right\}
\leq \exp \left( - \frac{2 \lambda^2}{n} \right).
\end{equation}
\end{theorem}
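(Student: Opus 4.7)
This is the classical Hoeffding bound specialized to Bernoulli summands, so the plan is to carry out the usual Chernoff/exponential moment argument. I would establish the upper tail first and then obtain the lower tail by a symmetry argument (replacing $Y_i$ by $1 - Y_i$).

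For the upper tail, I would start from Markov's inequality applied to $e^{sY}$ for an arbitrary $s > 0$: for any $\lambda > 0$,
\[
\p(Y \geq np + \lambda) = \p\!\left(e^{sY} \geq e^{s(np+\lambda)}\right) \leq e^{-s(np+\lambda)}\, \E\!\left[e^{sY}\right].
\]
By independence, $\E[e^{sY}] = \prod_{i=1}^n \E[e^{sY_i}] = (1 - p + p e^s)^n$. The key analytic input is Hoeffding's lemma, which I would state and prove inline: for a random variable $X$ supported in $[0,1]$ with mean $p$,
\[
\E\!\left[e^{s(X - p)}\right] \leq e^{s^2/8}.
\]
This is proved by writing $\psi(s) := \log\E[e^{s(X-p)}]$, computing $\psi(0) = \psi'(0) = 0$, and showing that $\psi''(s) \leq 1/4$ uniformly in $s$ (since $\psi''(s)$ is the variance under a tilted distribution supported in $[0,1]$, and variances of $[0,1]$-valued random variables are at most $1/4$), then Taylor-expanding.

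Combining these observations gives $\E[e^{s(Y - np)}] \leq e^{ns^2/8}$, so
\[
\p(Y \geq np + \lambda) \leq \exp\!\left(\tfrac{ns^2}{8} - s\lambda\right).
\]
The right-hand side is minimized at $s = 4\lambda/n$, yielding the bound $\exp(-2\lambda^2/n)$. For the lower tail, I would apply exactly the same argument to the variables $Y_i' := 1 - Y_i$, which are i.i.d.\ Bernoulli$(1-p)$ with sum $Y' = n - Y$, so that $\p(Y \leq np - \lambda) = \p(Y' \geq n(1-p) + \lambda) \leq \exp(-2\lambda^2/n)$.

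The only nonroutine step is Hoeffding's lemma, specifically the uniform bound $\psi''(s) \leq 1/4$; everything else is a one-line Markov/optimization calculation. Since the result is a standard textbook fact, I would expect the paper either to cite Hoeffding directly or to include the short $\psi''$ argument as the sole substantive computation.
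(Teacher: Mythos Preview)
Your proposal is correct and is the standard Chernoff--Hoeffding argument. Note, however, that the paper does not actually prove this theorem: it is introduced as ``a simple corollary of the Chernoff bound'' and stated without proof, so there is no paper proof to compare against---your argument simply supplies the details the authors omitted as well known.
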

The bounds in Theorem~\ref{thm:Chernoff} imply the following concentration bound.
\begin{corollary}\label{cor:chernoff}
 For any $k \geq 1$ and $\lambda > 0$ we have
\begin{equation}\label{eq:conc2}
 \p \left( \left| \widehat{X}_k - \E \left( \widehat{X}_k \right) \right| \geq \lambda \right) \leq 2 \exp \left( - 2 \lambda^2 L \right).
\end{equation}
\end{corollary}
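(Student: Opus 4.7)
The plan is to observe that $L\widehat{X}_k = \#\{i : t_i \in I_k\} = \sum_{i=1}^L \mathbf{1}\{t_i \in I_k\}$ is precisely a sum of $L$ i.i.d.\ Bernoulli random variables, and then invoke Theorem~\ref{thm:Chernoff} directly.

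More concretely, since the pairwise coalescence times $t_1, \ldots, t_L$ are i.i.d., the indicator variables $Y_i := \mathbf{1}\{t_i \in I_k\}$ for $i=1,\ldots,L$ are i.i.d.\ Bernoulli$(p)$ random variables, where $p = \p(t_1 \in I_k)$. Setting $Y := \sum_{i=1}^L Y_i = L \widehat{X}_k$, we have $\E(Y) = Lp = L\, \E(\widehat{X}_k)$. Applying the two bounds in Theorem~\ref{thm:Chernoff} with $n=L$ and threshold $L\lambda$, and then taking a union bound over the upper and lower tails, yields
\begin{equation*}
\p\!\left(|Y - Lp| \geq L\lambda\right) \leq 2 \exp\!\left(-\frac{2(L\lambda)^2}{L}\right) = 2\exp(-2\lambda^2 L).
\end{equation*}
Dividing the event inside the probability by $L$ gives exactly~\eqref{eq:conc2}.

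There is no real obstacle here: the only thing to check is that $L\widehat{X}_k$ is genuinely a sum of i.i.d.\ Bernoulli variables, which is immediate from the definition of $\widehat{X}_k$ in Step~2 of the estimation procedure and the i.i.d.\ assumption on the coalescence times. The scaling factor $L$ between $\widehat{X}_k$ and the Chernoff-level statement is the only bookkeeping item worth writing down explicitly.
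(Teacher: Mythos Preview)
Your argument is correct and is exactly the intended one: the paper does not write out a proof of Corollary~\ref{cor:chernoff}, merely stating that the bounds of Theorem~\ref{thm:Chernoff} imply it, and your reduction via $Y_i = \mathbf{1}\{t_i \in I_k\}$ with $n=L$ and threshold $L\lambda$ is precisely that implication made explicit.
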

In the following we present two bounds on the error $E_{k,2}$. We first present a bound for the first interval (i.e., when $k = 1$), which then also implies conditional bounds for general intervals, by conditioning on the number of data points that have not coalesced by a given time.

\subsection{Bounds for the first interval}\label{sec:bds1} 

\begin{proposition}\label{prop:k1-simple}
 For any $c \geq 0$, with probability at least $1 - 2 \exp \left( - 2 c^2 L \right)$, the logarithm of the effective constant population size in $I_1$, $\log \widetilde{N}_1$, is in the interval
\begin{equation}\label{eq:k1_conf_int}
 \left[ \log \left( \eps N_0 \right) - \log \left( - \log \left( \left( 1 - \widehat{X}_1 - c \right) \vee 0 \right) \right), \log \left( \eps N_0 \right) - \log \left( - \log \left( \left( 1 - \widehat{X}_1 + c \right) \wedge 1 \right) \right) \right].
\end{equation}
\end{proposition}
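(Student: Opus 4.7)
\textbf{Proof plan for Proposition~\ref{prop:k1-simple}.}

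The plan is to express $\log \widetilde{N}_1$ as a monotone function of the probability $p_1 := \p(t_1 \in I_1)$, apply the Chernoff-type bound of Corollary~\ref{cor:chernoff} to $\widehat{X}_1$ (which is the empirical version of $p_1$), and invert monotonicity to transfer the confidence interval from the $p$-scale to the $\log \widetilde{N}$-scale. The boundary truncations $(\cdot)\vee 0$ and $(\cdot)\wedge 1$ will then arise automatically from the fact that $\widehat{X}_1 \pm c$ may escape $[0,1]$ while the true $p_1$ cannot.

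First I would compute $p_1$. Since each $t_i$ is the coalescence time of a pair drawn from the population shape $N$, we have $\p(t_1 > s) = \exp(-\int_0^s 1/N(u)\,du)$, and hence
\begin{equation*}
p_1 = 1 - \exp\!\left(-\int_0^{\eps N_0} \tfrac{1}{N(t)}\,dt\right) = 1 - \exp(-\eps N_0/\widetilde{N}_1),
\end{equation*}
by the very definition of $\widetilde{N}_1$. Solving for $\widetilde{N}_1$ and taking logarithms gives $\log \widetilde{N}_1 = g(p_1)$, where
\begin{equation*}
g(p) := \log(\eps N_0) - \log(-\log(1-p)), \qquad p \in (0,1).
\end{equation*}
A glance at the derivative (or the chain of monotonicities $1-p \downarrow$, $-\log(1-p)\uparrow$, $\log(-\log(1-p))\uparrow$) shows that $g$ is strictly decreasing on $(0,1)$, with $g(0^+)=+\infty$ and $g(1^-)=-\infty$.

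Next I would invoke Corollary~\ref{cor:chernoff} with $\lambda = c$, noting that $\E[\widehat{X}_1]=p_1$. This yields the event
\begin{equation*}
\mathcal{A} := \bigl\{\,\widehat{X}_1 - c \leq p_1 \leq \widehat{X}_1 + c\,\bigr\}
\end{equation*}
with $\p(\mathcal{A}) \geq 1 - 2\exp(-2c^2 L)$. On $\mathcal{A}$, using that $p_1 \in [0,1]$ always holds, we may replace the (possibly out-of-range) endpoints by $(\widehat{X}_1 - c)\vee 0$ and $(\widehat{X}_1 + c)\wedge 1$ without weakening the bound. Applying the decreasing map $g$ then flips the inequalities and gives
\begin{equation*}
g\!\left((\widehat{X}_1 + c)\wedge 1\right) \;\leq\; \log \widetilde{N}_1 \;\leq\; g\!\left((\widehat{X}_1 - c)\vee 0\right),
\end{equation*}
which is precisely the interval~\eqref{eq:k1_conf_int} once one rewrites $1 - ((\widehat{X}_1\pm c)\text{ truncated})$ as $((1 - \widehat{X}_1 \mp c)\text{ truncated to }[0,1])$, with the convention $-\log(0)=+\infty$ and $\log(+\infty)=+\infty$ in the degenerate cases.

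No step is particularly delicate here; the only point to be careful about is the direction of monotonicity of $g$ (so that the upper bound on $p_1$ becomes a lower bound on $\log\widetilde{N}_1$ and vice versa) and the bookkeeping of the truncations, which merely encode the trivial a priori bounds $0 \leq p_1 \leq 1$.
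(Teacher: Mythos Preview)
Your proposal is correct and follows essentially the same approach as the paper: apply Corollary~\ref{cor:chernoff} with $k=1$ and $\lambda=c$ to trap $\E\widehat{X}_1$ in $[(\widehat{X}_1-c)\vee 0,(\widehat{X}_1+c)\wedge 1]$, then convert this to an interval for $\log\widetilde{N}_1$ via the monotone relation between $\widetilde{N}_1$ and $p_1$. The paper compresses your explicit construction of the decreasing map $g$ and the monotonicity/truncation bookkeeping into the single phrase ``by algebraic manipulation,'' but the content is identical.
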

Note that the interval in~\eqref{eq:k1_conf_int} is an interval around our estimate $\log \widehat{N}_1$.
\begin{proof}
 The inequality~\eqref{eq:conc2} for $k=1$ can be rephrased as
\[
  \p \left( \E \widehat{X}_1 \in \left[ \left( \widehat{X}_1 - c \right) \vee 0, \left( \widehat{X}_1 + c \right) \wedge 1 \right] \right) \geq 1 - 2 \exp \left( - 2 c^2 L \right).
\]
 By algebraic manipulation, $\E \widehat{X}_1 \in \left[ \left( \widehat{X}_1 - c \right) \vee 0, \left( \widehat{X}_1 + c \right) \wedge 1 \right]$ is equivalent to $\log \widetilde{N}_1$ being contained in the interval in~\eqref{eq:k1_conf_int}.
\end{proof}

This bound is useful because we can immediately determine a confidence interval for our estimate. To achieve a confidence level of $1-\alpha$, we can choose $c = c \left( \alpha, L \right)$ to satisfy $2 \exp \left( - 2 c^2 L \right) = \alpha$, i.e., choose 
\begin{equation}\label{eq:c}
 c = \sqrt{\frac{\log \left( 2 / \alpha \right)}{2L}}.
\end{equation}
Then the interval in~\eqref{eq:k1_conf_int} with $c$ given by~\eqref{eq:c} has a confidence level of $1- \alpha$.

\subsection{Conditional bounds}\label{sec:bds_cond} 

Next, we present conditional bounds: given the number of samples that did not coalesce in the time interval $\left[0,\left( k-1 \right) \eps N_0 \right]$, what is the error we make when estimating the population size in the time interval $I_k$? The following result is the same as Proposition~\ref{prop:k_cond_intro} but worded more precisely.

\begin{proposition}\label{prop:k_cond}
 For any $c \geq 0$, the probability conditioned on $L \left( 1 - \widehat{S}_{k-1} \right) = \ell$ (i.e., that $\ell$ samples ``survived'' the first $k-1$ intervals) that the logarithm of the effective constant population size, $\log \widetilde{N}_k$, is in the interval
\begin{equation}\label{eq:k_cond_int}
 \left[ \log \left( \eps N_0 \right) - \log \left( - \log \left( \left( 1 - \frac{L}{\ell}\widehat{X}_k - c \right) \vee 0 \right) \right), \log \left( \eps N_0 \right) - \log \left( - \log \left( \left( 1 - \frac{L}{\ell}\widehat{X}_k + c \right) \wedge 1 \right) \right) \right]
\end{equation}
is at least $1 - 2 \exp \left( - 2 c^2 \ell \right)$.
\end{proposition}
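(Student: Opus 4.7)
The plan is to follow the template of Proposition~\ref{prop:k1-simple} but replace the unconditional Chernoff bound by its conditional analog, where the conditioning is on the number of lineages that have not yet coalesced by time $(k-1)\eps N_0$. The key observation that makes this work is that, conditionally on $L(1 - \widehat{S}_{k-1}) = \ell$ and on which $\ell$ of the $L$ original samples are the survivors, the indicators $\mathbf{1}\{t_i \in I_k\}$ for the surviving indices $i$ are i.i.d.\ Bernoulli with success probability
\[
 p_k := 1 - \exp\!\left( - \int_{(k-1)\eps N_0}^{k\eps N_0} \frac{1}{N(t)}\, dt \right) = 1 - \exp(-\eps N_0 / \widetilde{N}_k).
\]
This follows from the independence of the coalescence times $t_1,\dots,t_L$ and from the formula for the conditional density of a single $t_i$ given $t_i \geq (k-1)\eps N_0$, together with the symmetry argument that lets us condition on the identities of the surviving samples without loss of generality.

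First I would record this conditional Bernoulli structure precisely, and note that $L\widehat{X}_k$, restricted to this conditional world, is a sum of $\ell$ i.i.d.\ $\mathrm{Be}(p_k)$ random variables. Then I would apply Theorem~\ref{thm:Chernoff} with $n=\ell$ and $\lambda = c\ell$ to obtain
\[
 \p\!\left( \left| \tfrac{L}{\ell} \widehat{X}_k - p_k \right| \geq c \;\middle|\; L(1-\widehat{S}_{k-1}) = \ell \right) \leq 2\exp(-2c^2 \ell),
\]
which gives the desired probabilistic statement but expressed in terms of $p_k$ rather than $\widetilde{N}_k$.

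The remaining step is purely algebraic: invert the relationship $\widetilde{N}_k = \eps N_0 / (-\log(1 - p_k))$ and transfer the inequality $(\tfrac{L}{\ell}\widehat{X}_k - c) \leq p_k \leq (\tfrac{L}{\ell}\widehat{X}_k + c)$ to the scale of $\log \widetilde{N}_k$ by applying the monotone decreasing map $q \mapsto \log(\eps N_0) - \log(-\log(1-q))$ to all three quantities, taking care to truncate arguments via $\vee 0$ and $\wedge 1$ so that the logarithms remain well-defined even when $c$ is large relative to $L \widehat{X}_k/\ell$. This truncation is exactly what produces the expressions in~\eqref{eq:k_cond_int}.

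I do not anticipate a genuine obstacle, since every ingredient is in place: the concentration bound is Corollary~\ref{cor:chernoff} applied conditionally, and the formula for $p_k$ comes directly from the definition of $\widetilde{N}_k$. The only subtlety worth stating carefully is the conditioning argument that produces the binomial structure, because it is tempting to condition on the event $\{L(1-\widehat{S}_{k-1})=\ell\}$ and then worry that the surviving coalescence times become correlated; the clean way around this is to condition instead on the finer $\sigma$-algebra generated by the indicators $\mathbf{1}\{t_i \geq (k-1)\eps N_0\}$ for all $i$, observe that under this conditioning the residual coalescence times of the survivors are independent, and then marginalize out to recover the statement in terms of $\ell$.
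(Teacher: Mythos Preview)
Your proposal is correct and follows essentially the same route as the paper: condition on the $\ell$ survivors, observe that $\frac{L}{\ell}\widehat{X}_k$ is an average of $\ell$ i.i.d.\ Bernoulli$(p_k)$ indicators with $p_k = 1 - e^{-\eps N_0/\widetilde{N}_k}$, apply the Chernoff bound, and invert the monotone map to obtain the interval for $\log\widetilde{N}_k$. The paper's proof is terser---it simply asserts the i.i.d.\ structure without spelling out the finer-$\sigma$-algebra conditioning you describe---but your more careful treatment of that point is a welcome elaboration rather than a different argument.
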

Note that the interval in~\eqref{eq:k_cond_int} is an interval around our estimate $\log \widehat{N}_k$, given  $L \left( 1 - \widehat{S}_{k-1} \right) = \ell$.
\begin{proof}
 Let $\widehat{Y}_k := \frac{L}{\ell} \widehat{X}_k$. 
Given $L \left( 1 - \widehat{S}_{k-1} \right) = \ell$, $\widehat{Y}_k$ is the average of $\ell$ i.i.d.\ indicator variables. 
Therefore Chernoff's bound gives that
\[
 \p \left( \left| \widehat{Y}_k - \E \widehat{Y}_k \right| \geq c \, \middle| \, L \left( 1 - \widehat{S}_{k-1} \right) = \ell \right) \leq 2 \exp \left( - 2 c^2 \ell \right).
\]
In other words,
\[
 \p \left( \E \widehat{Y}_k \in \left[ \left( \widehat{Y}_k - c \right) \vee 0, \left( \widehat{Y}_k + c \right) \wedge 1 \right] \, \middle| \, L \left( 1 - \widehat{S}_{k-1} \right) = \ell \right) \geq 1 - 2 \exp \left( - 2 c^2 \ell \right).
\]
Just as in the proof of Proposition~\ref{prop:k1-simple}, by algebraic manipulation, given  $L \left( 1 - \widehat{S}_{k-1} \right) = \ell$, $\E \widehat{Y}_k \in \left[ \left( \widehat{Y}_k - c \right) \vee 0, \left( \widehat{Y}_k + c \right) \wedge 1 \right]$ is equivalent to $\log \widetilde{N}_k$ being contained in the interval in~\eqref{eq:k_cond_int}.
\end{proof}

Again, this bound is useful because we can immediately determine a confidence interval for our estimate. To achieve a confidence level of $1-\alpha$, we can choose $c = c \left( \alpha, \ell \right)$ to satisfy $2 \exp \left( - 2 c^2 \ell \right) = \alpha$, i.e., choose 
\begin{equation}\label{eq:c_conf_int}
 c = \sqrt{\frac{\log \left( 2 / \alpha \right)}{2\ell}}.
\end{equation}
Then the interval in~\eqref{eq:k_cond_int} with $c$ given by~\eqref{eq:c_conf_int} has a confidence level of $1- \alpha$.

\section{Simulations}\label{sec:sim} 

We illustrate  our estimation procedure on simulated data for the following settings: 
(1) constant size population, 
(2) piecewise constant size population, and 
(3) a population experiencing recent exponential growth; the last setting being germane to recent human population history (see, e.g.,~\cite{tennessen2012evolution} for a study on how recent accelerated population growth, together with weak purifying selection, can lead to an excess of rare functional variants). 
In each case, we simulate $L$ independent coalescence times and apply our estimation procedure described in Section~\ref{sec:algo} with a given $\eps$ to the data; the outcome is summarized in Figures~\ref{fig:constant}--\ref{fig:reconstruct} below. Each figure plots $\log \left( N\left( t \right) / N \left( 0 \right) \right)$ versus $t / N \left( 0 \right)$, i.e., we scale time according to the coalescent timescale, and we plot the population size on a logarithmic scale. 

The true history is the blue line, the estimates over each interval are the red lines, and the confidence intervals at the 95 percent level are given in pink. 
Recall that the logarithm of our estimate~\eqref{eq:est} is
\[
 \log \widehat{N}_k = \log \left( \eps N_0 \right) - \log \left( - \log \left( 1 - \frac{\widehat{X}_k}{1 - \widehat{S}_{k-1}} \right) \right),
\]
and our confidence interval for confidence level $1 - \alpha$ is given by~\eqref{eq:k_cond_int}:
\[
\mbox{
\small
$
\left[ \log \left( \eps N_0 \right) - \log \left( - \log \left( \left( 1 - \frac{\widehat{X}_k}{1 - \widehat{S}_{k-1}} - c \right) \vee 0 \right) \right), \log \left( \eps N_0 \right) - \log \left( - \log \left( \left( 1 - \frac{\widehat{X}_k}{1 - \widehat{S}_{k-1}} + c \right) \wedge 1 \right) \right) \right],
$
}
\]
where
\[
 c = \sqrt{\frac{\log \left( 2 / \alpha \right)}{2L \left( 1 - \widehat{S}_{k-1} \right)}}.
\]
In the case that $\widehat{X}_k=0$ we do not give an estimate but can sometimes still obtain a lower bound on the confidence interval. 
The intervals where the pink extends to the upper or lower margin of the graphing area represent a confidence bound that is infinite or zero, i.e., where the minimum or maximum are taken to be one or zero in the expressions for the confidence bounds above. 

The error $E_{k,1}$ is not represented in the plots, but would add $\delta$ to each side of each confidence interval. Alternatively, we can view our statistic as an estimate of the effective constant population size over the given interval.

The plots in Figure~\ref{fig:reconstruct} compare our lower bounds to the confidence intervals of our estimation procedure.
The black lines are the 95 percent uncertainty intervals: 
in a given time interval, 
populations within the interval given by the black lines cannot be distinguished from the true blue line population with the amount of data in hand with probability $.95$. 
Thus if the red line is within the interval given by the black lines, then our estimate is in some sense the best that can be achieved.
Note that when an interval has no upper black line, then there is not enough data to distinguish between a history of the blue line size
in the interval and \emph{any} larger size population with 95\% certainty.

Even though our assumed data is idealized and unrealistic (exact coalescent data at thousands of independent sites),
these simulations allow us to make some general qualitative observations. 
The major determining factor of the performance of our procedure in a time interval is the number of coalescence times that have survived to that interval.
So having more data (i.e., larger $L$) leads to more accurate estimates, and the estimates lose accuracy moving back in time as the number of 
data points decreases. 
Moreover, there is a rare event effect when there are few coalescences in an interval---having no coalescences in an interval is not very informative---and this leads to 
the consistent underestimates in the deepest part of the histories. 
Possibly this effect would be lessened by lengthening the widths of the intervals as they go back in time, although this would smooth out
big features in the history.
Also note that in the presence of a  bottleneck, i.e., a time period where the population 
becomes small, there are many coalescences due to the increased rate. In turn this decreases the number of available data deeper in history. For example, compare the accuracy of the estimates
at time $t/N(0)=4$ in the constant population of Figure~\ref{fig:constant} to that of the piecewise constant population of Figure~\ref{fig:piece} where there is a bottleneck starting around time $t/N(0)=1$. 
Finally, note that in Figure~\ref{fig:reconstruct} the confidence intervals of our procedure and the lower bound black lines are rather tight in the presence of a significant amount of data, but loosen as the number of data points decreases.
 \section{Discussion}\label{sec:discussion} 

An assortment of methods have been developed to infer a population's history
from (an ever increasing amount of) genetic data \cite{Bhaskar2014, Drummond2005, Excoffier2013, harris2013inferring, heled2008bayesian,   Li2011, Nielsen2000, palamara2012length, Sheehan2013}. 
These methods are necessarily computational and approximate
and so the quality of the outputs of these methods cannot be rigorously justified.    
However, understanding the theoretical limitations of inferring past population history \cite{bhaskar2013identifiability, myers2008can} 
is of the utmost importance, 
since it is becoming increasingly common for such analyses to be used as the main tool for inference, with less
emphasis on external verification (e.g., fossil or paleontological record)~\cite{Bos2014}.

Here we have provided lower bounds on the amount of idealized data needed to infer a population history to a given accuracy. 
Our bounds should be considered as underestimates of the amount of data necessary for inference in methods which use sequence data, 
so they can be used as a guide when performing such analyses. 
We end with some further avenues of study and open problems.

\subsection{Open Problems}\label{sec:open} 

\noindent \textbf{$n$-coalescence trees.} 
With the exception of Theorem~\ref{lem:bdm_intro}, we assume that our data are $L$ i.i.d.\ coalescence times between pairs of individuals. 
If instead our data are $L$ i.i.d.\ coalescence trees among $n$ individuals then 
how does this affect the bounds? In the setting of Theorem~\ref{lem:bdm_intro} when comparing two constant populations, increasing the number of individuals $n$ in
the coalescent tree is as good as increasing the number of independent loci $L$. This shouldn't hold true in general since adding individuals does not greatly increase the depth of the tree.
For moderate values of $n$, we expect that estimates of the deep history will not be greatly affected since the time of coalescence for the final two lineages is roughly half the length of the coalescent tree started from \emph{infinitely} many individuals. On the other hand, explosive growth in the near history should be estimated better using more individuals since the amount of coalescing in the near past will increase.
It would be interesting to better understand how increasing the number of individuals in the tree affects the lower bounds of Section~\ref{sec:low} and the upper bounds 
provided by some generalization of our inference algorithm of Section~\ref{sec:results}. For some discussion on the affect of increasing the size of the tree versus increasing 
independent loci, see~\cite{heled2008bayesian}.

\smallskip

\noindent \textbf{Estimation from sequence data.} 
The assumption that we know exact coalescence times is unrealistic.
These times  need to be estimated from sequence data at independent loci with good accuracy. 
How do we estimate coalescence times from sequence data with quantitative upper and lower bounds analogous to those here? 
%
%
%
\smallskip

\noindent \textbf{Population substructure.}  Assume we want to estimate
a population that is not only changing over time, but also has sub-populations that merge
and split, and which may have migration rates between them. Are there analogs of our results
in this setting? Note that identifiability can be an issue here since, for example,
a constant population that splits at a given point in the past has the same 
distribution of coalescence times among two individuals as a single population that grows exponentially at a specific rate
(backward in time) starting at the time of the split.

%
%
%
%
%
%
%
%
%


\section*{Acknowledgments}

We thank Anand Bhaskar, Luke Gandolfo, Jasmine Nirody, Sara Sheehan, and Yun Song for helpful discussions and relevant references. 
A portion of the work for this project was completed when NR was at University of California, Berkeley with support from NSF grants DMS-0704159, DMS-0806118, DMS-1106999 and ONR grant N00014-11-1-0140.


\bibliographystyle{abbrv}
\bibliography{Density}

\begin{figure}
    \centering
    \begin{subfigure}[h]{0.4\textwidth}
	\centering
	\includegraphics[width=\textwidth]{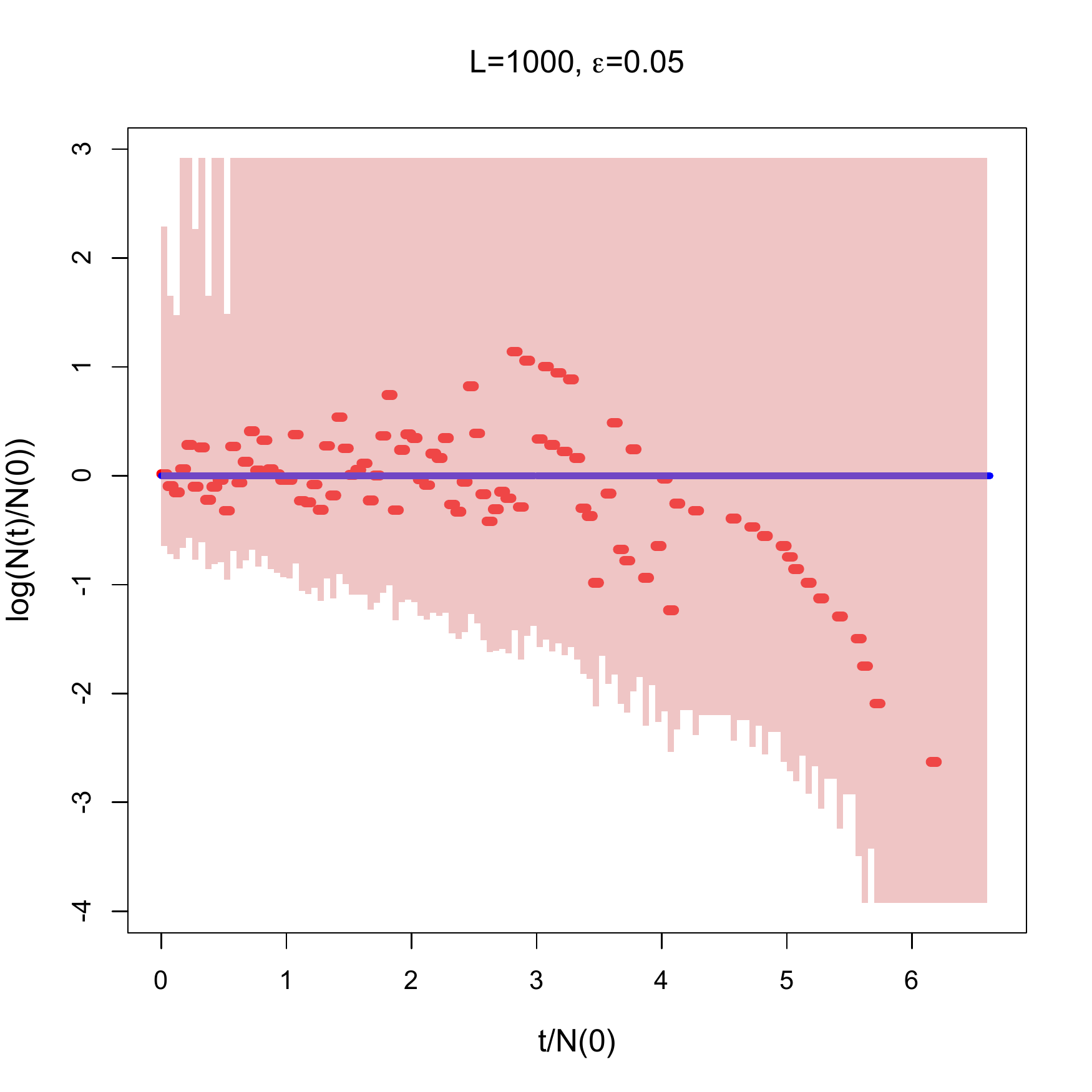}
	\caption{$L=10^3, \eps = 0.05$.}
	\label{fig:const1}
    \end{subfigure}
    \ \ 
    \begin{subfigure}[h]{0.4\textwidth}
	\centering
	\includegraphics[width=\textwidth]{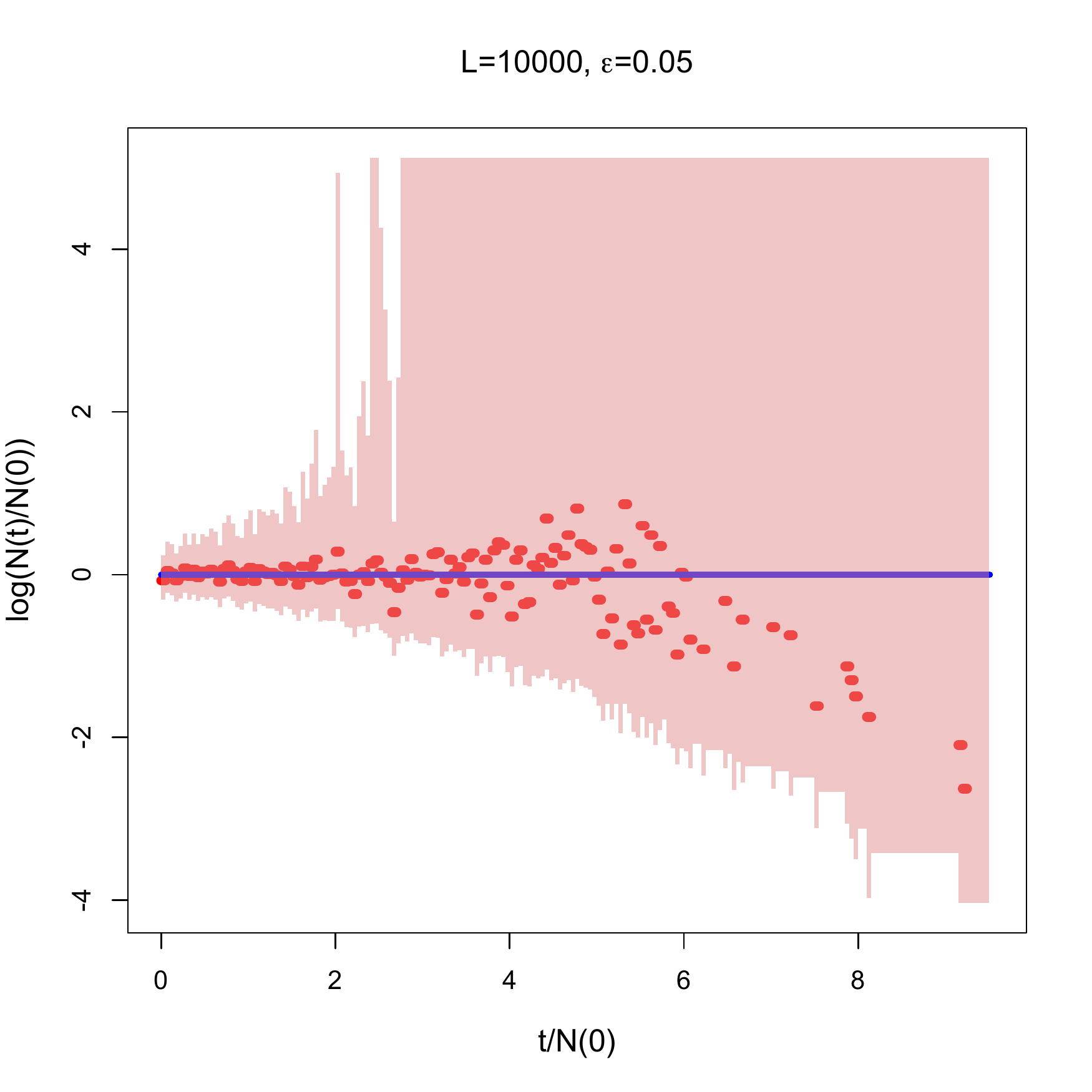}
	\caption{$L=10^4, \eps = 0.05$.}
	\label{fig:const2}
    \end{subfigure}
    \\ 
      \begin{subfigure}[h]{0.4\textwidth}
	\centering
	\includegraphics[width=\textwidth]{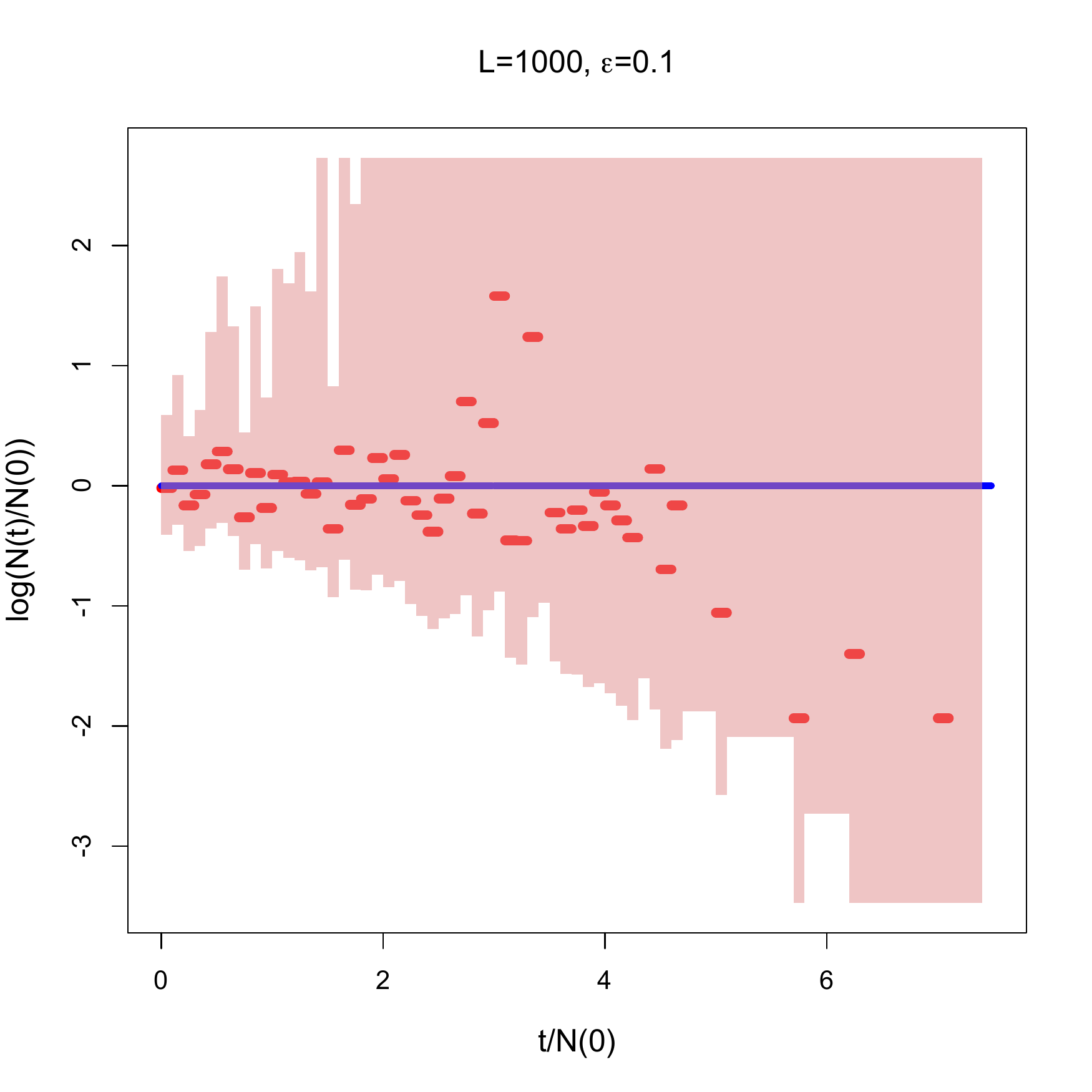}
	\caption{$L=10^3, \eps = 0.1$.}
	\label{fig:const3}
    \end{subfigure}
    \ \ 
    \begin{subfigure}[h]{0.4\textwidth}
	\centering
	\includegraphics[width=\textwidth]{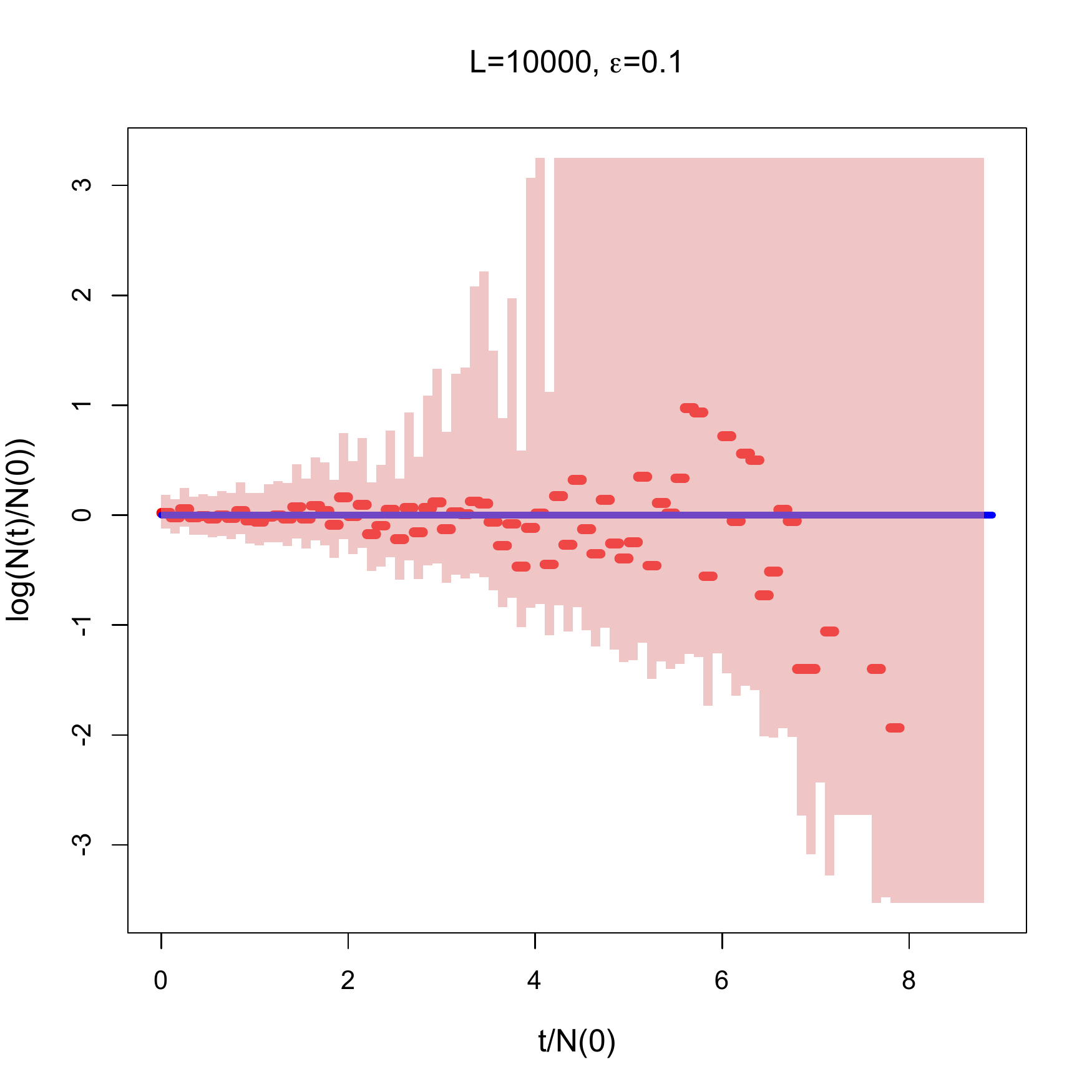}
	\caption{$L=10^4, \eps = 0.1$.}
	\label{fig:const4}
    \end{subfigure}
    \\
      \begin{subfigure}[h]{0.4\textwidth}
	\centering
	\includegraphics[width=\textwidth]{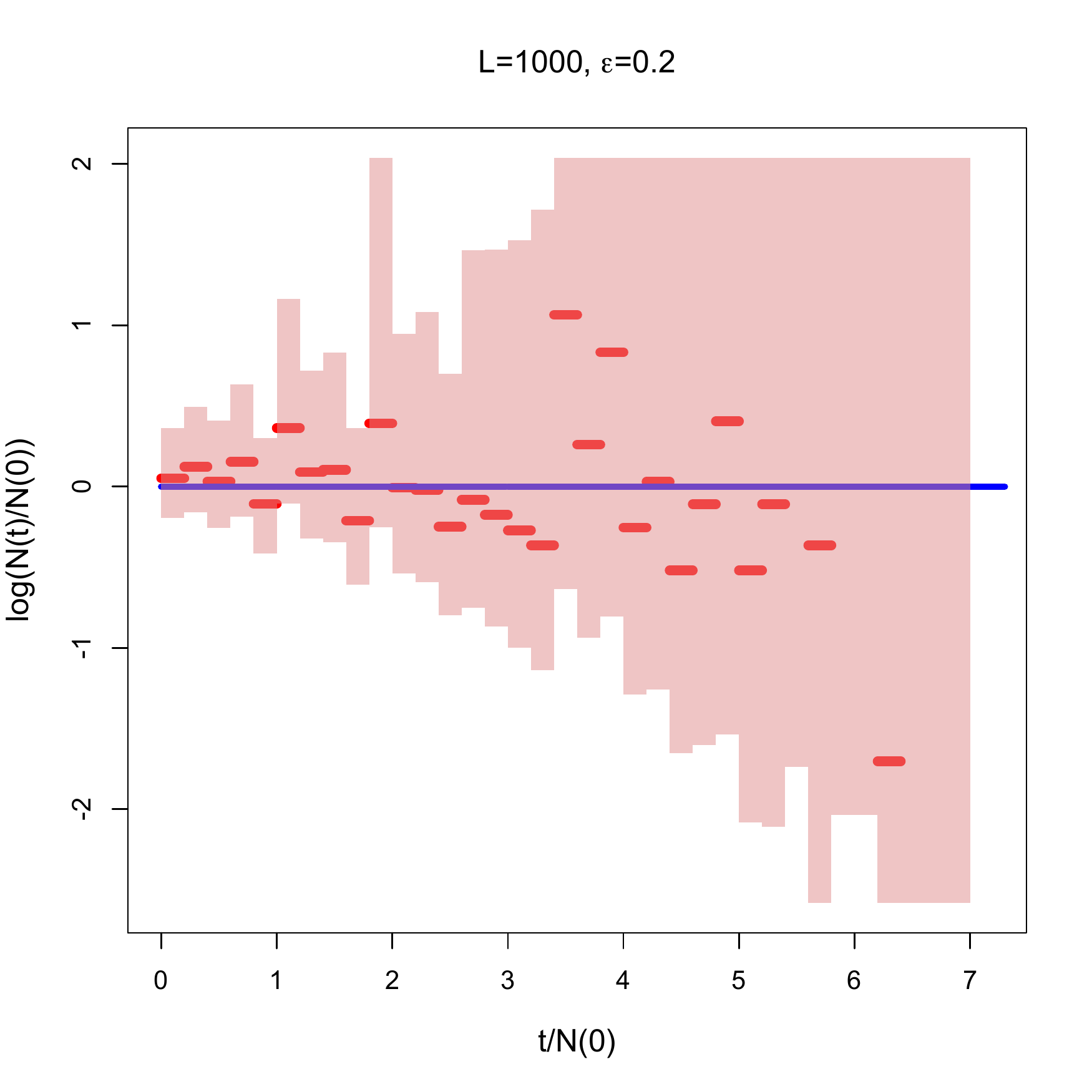}
	\caption{$L=10^3, \eps = 0.2$.}
	\label{fig:const5}
    \end{subfigure}
    \ \ 
    \begin{subfigure}[h]{0.4\textwidth}
	\centering
	\includegraphics[width=\textwidth]{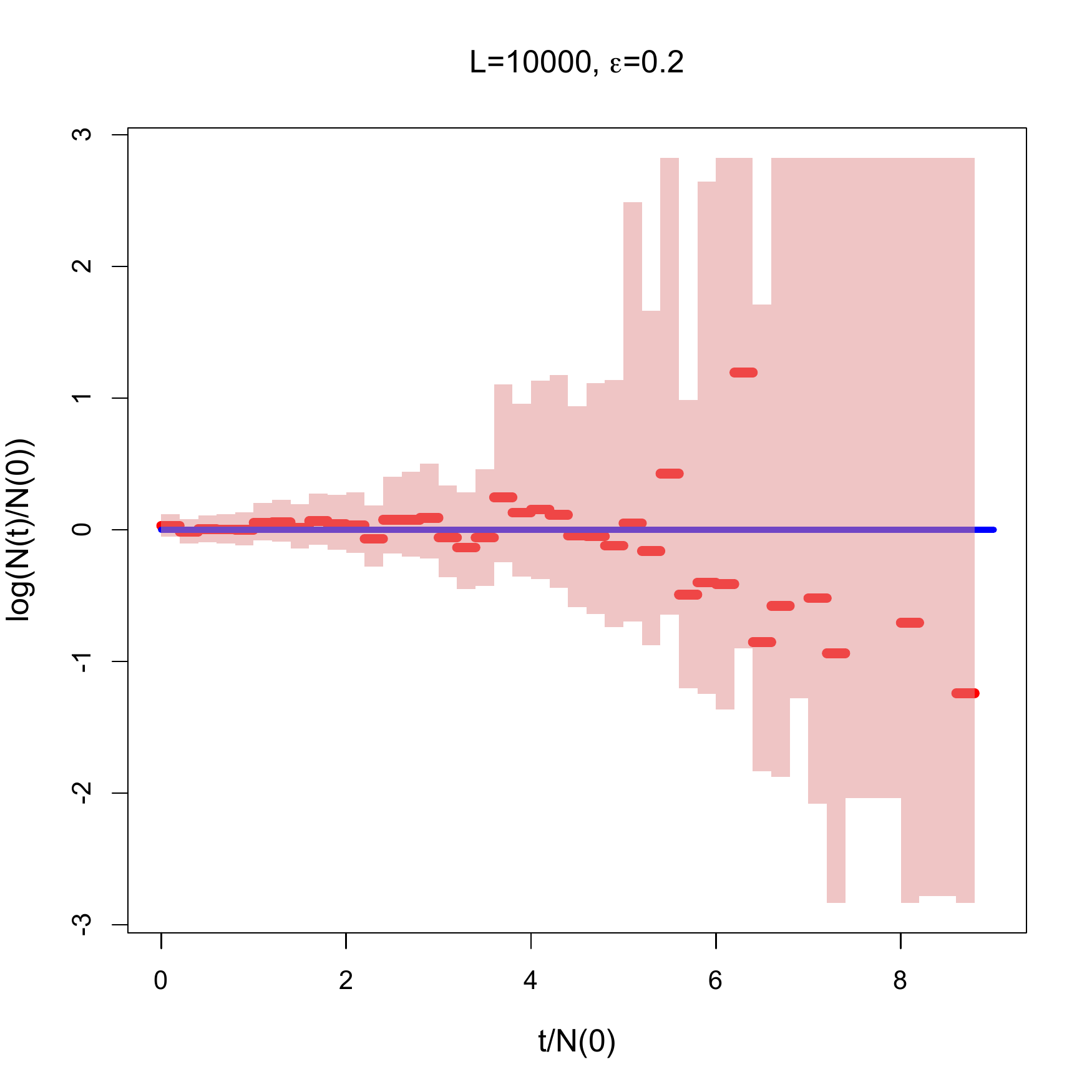}
	\caption{$L=10^4, \eps = 0.2$.}
	\label{fig:const6}
    \end{subfigure}

    \caption{Estimating a constant population size.}
    \label{fig:constant}
\end{figure}

\begin{figure}
    \centering
    \begin{subfigure}[h]{0.4\textwidth}
	\centering
	\includegraphics[width=\textwidth]{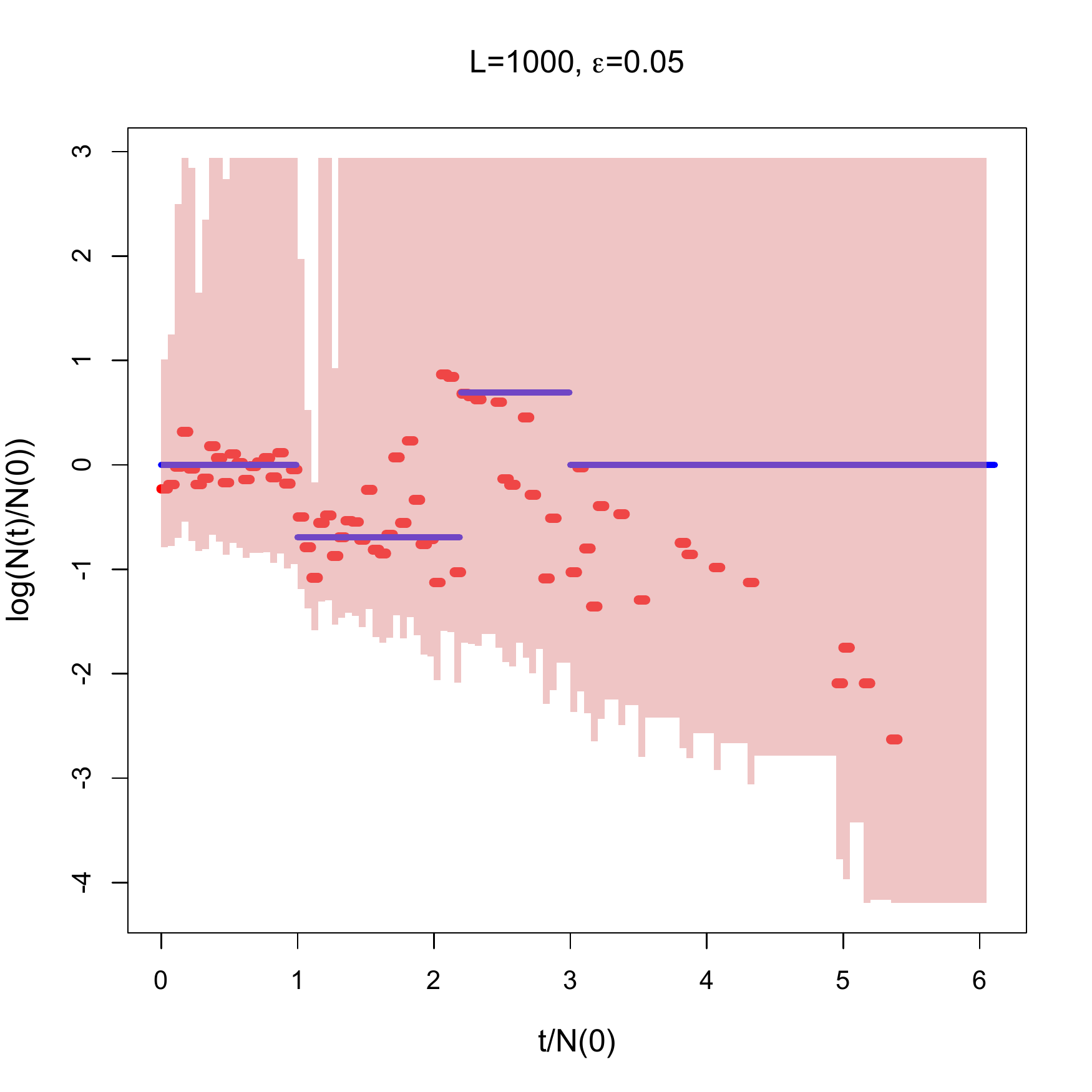}
	\caption{$L=10^3, \eps = 0.05$.}
	\label{fig:piece1}
    \end{subfigure}
    \ \ 
    \begin{subfigure}[h]{0.4\textwidth}
	\centering
	\includegraphics[width=\textwidth]{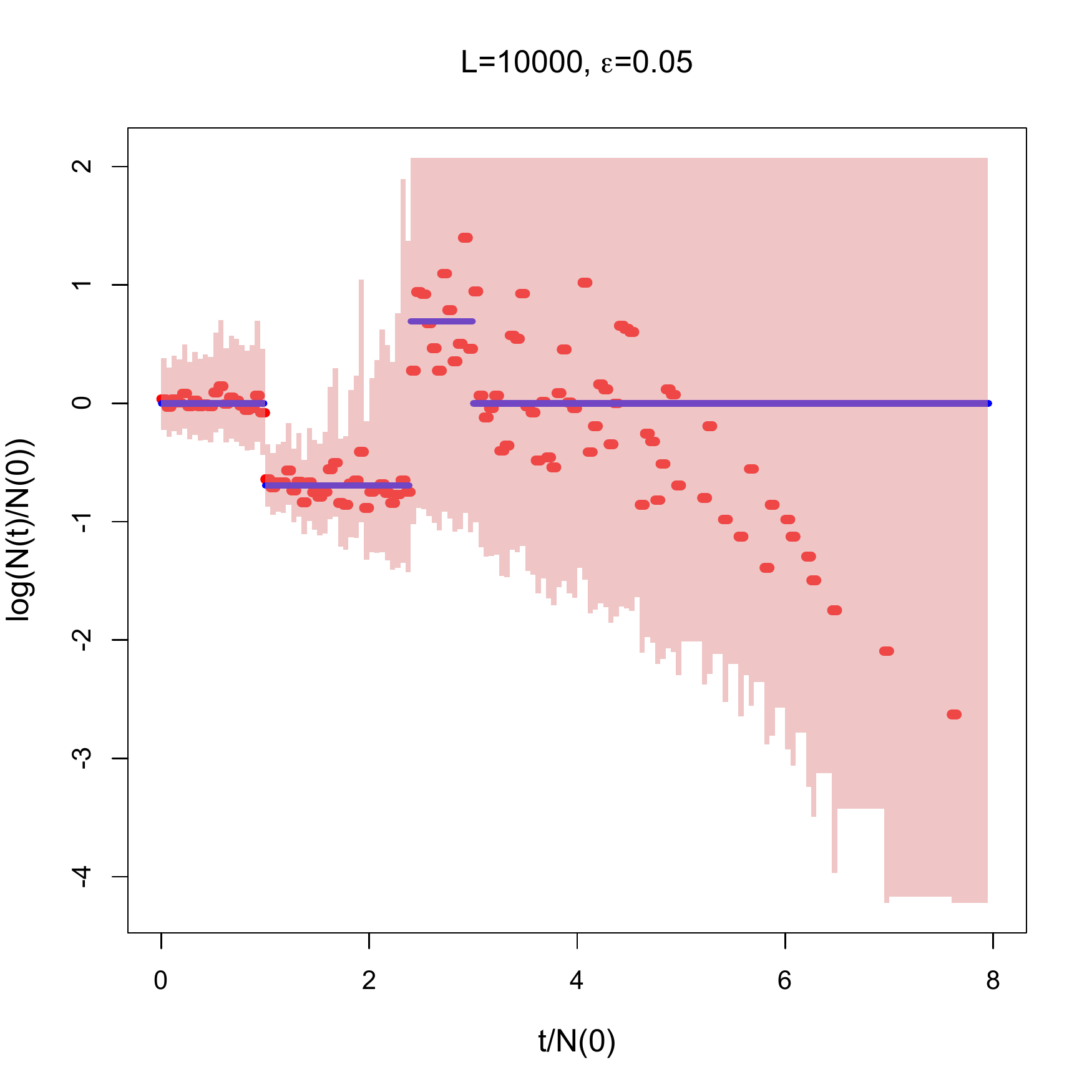}
	\caption{$L=10^4, \eps = 0.05$.}
	\label{fig:piece2}
    \end{subfigure}
    \\ 
      \begin{subfigure}[h]{0.4\textwidth}
	\centering
	\includegraphics[width=\textwidth]{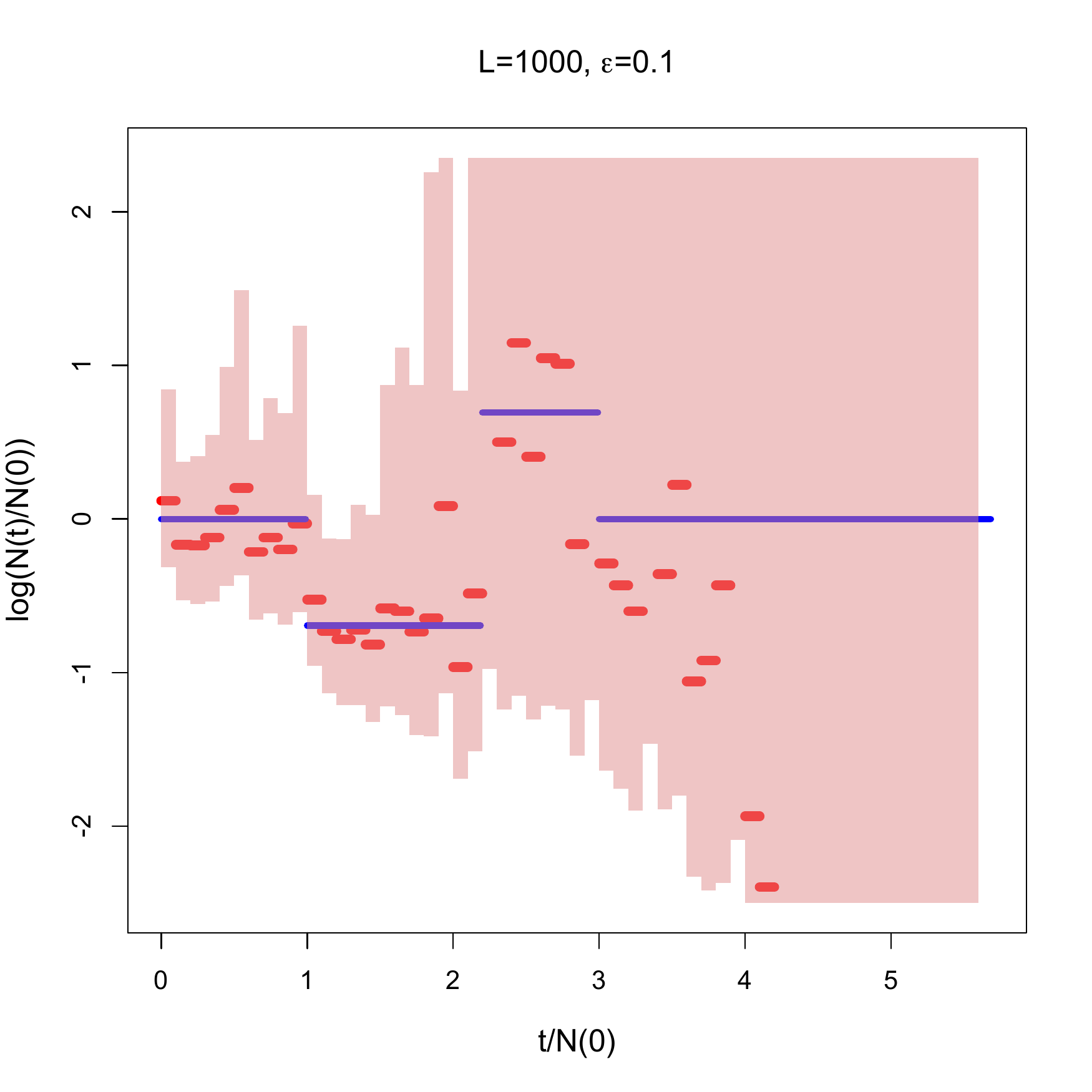}
	\caption{$L=10^3, \eps = 0.1$.}
	\label{fig:piece3}
    \end{subfigure}
    \ \ 
    \begin{subfigure}[h]{0.4\textwidth}
	\centering
	\includegraphics[width=\textwidth]{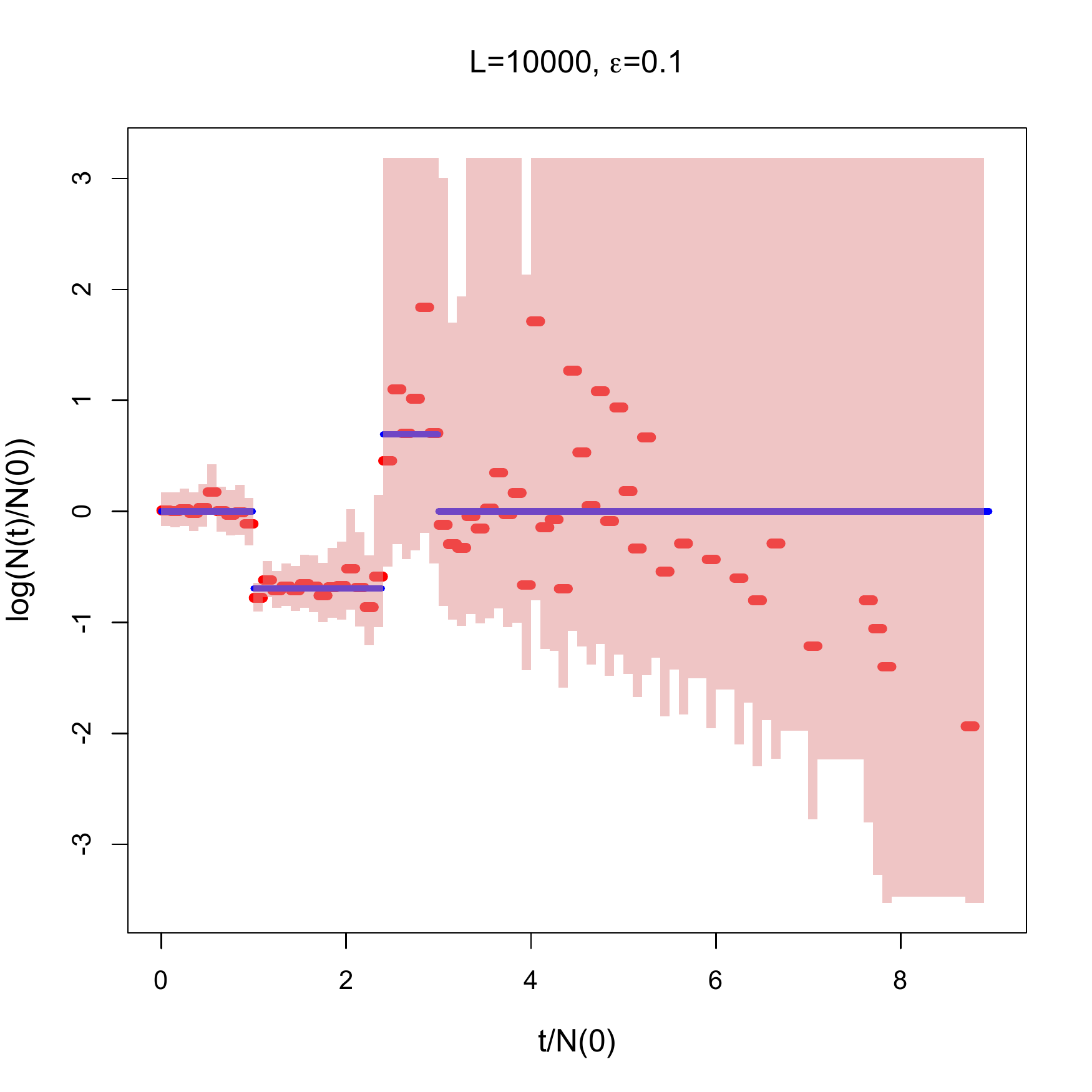}
	\caption{$L=10^4, \eps = 0.1$.}
	\label{fig:piece4}
    \end{subfigure}
    \\
      \begin{subfigure}[h]{0.4\textwidth}
	\centering
	\includegraphics[width=\textwidth]{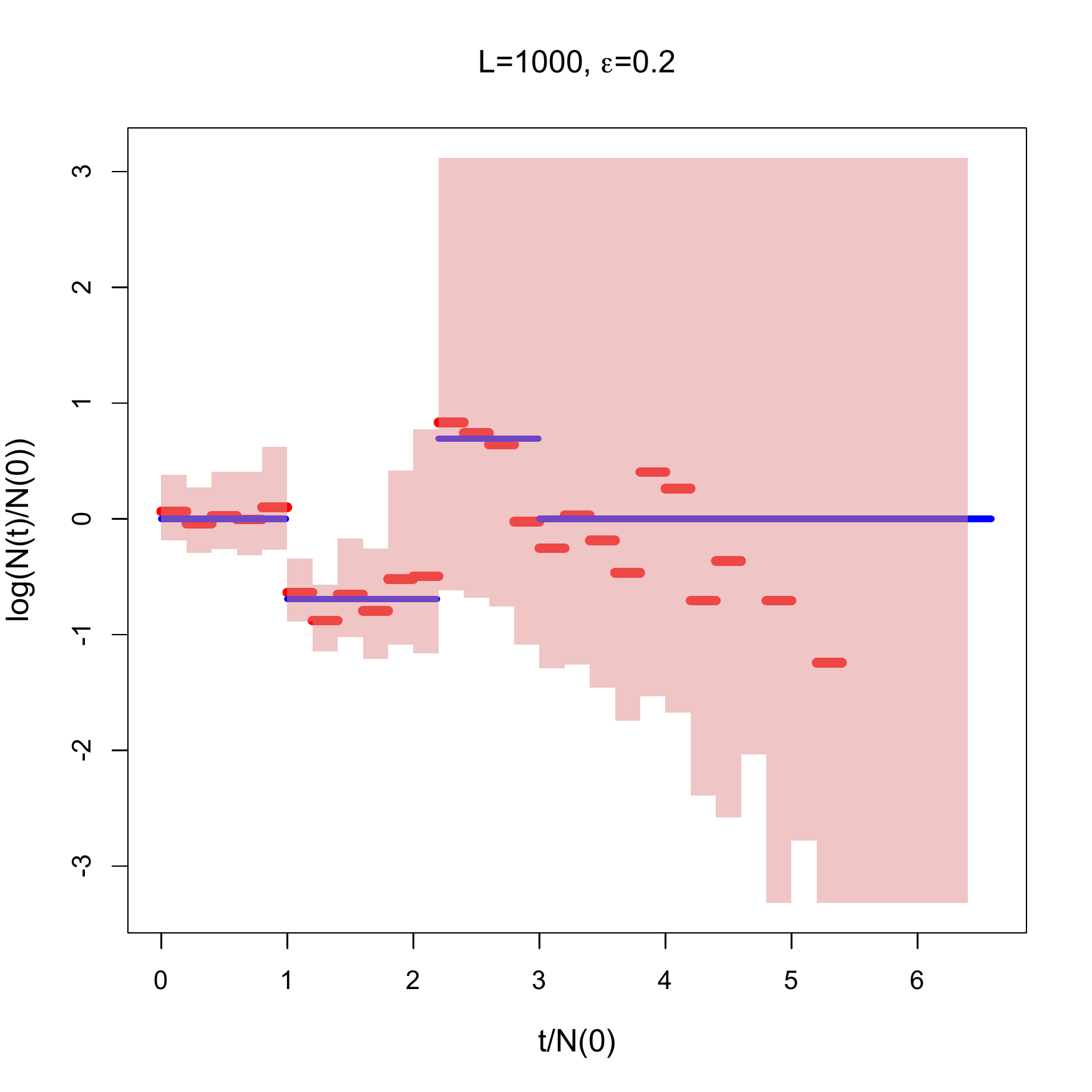}
	\caption{$L=10^3, \eps = 0.2$.}
	\label{fig:piece5}
    \end{subfigure}
    \ \ 
    \begin{subfigure}[h]{0.4\textwidth}
	\centering
	\includegraphics[width=\textwidth]{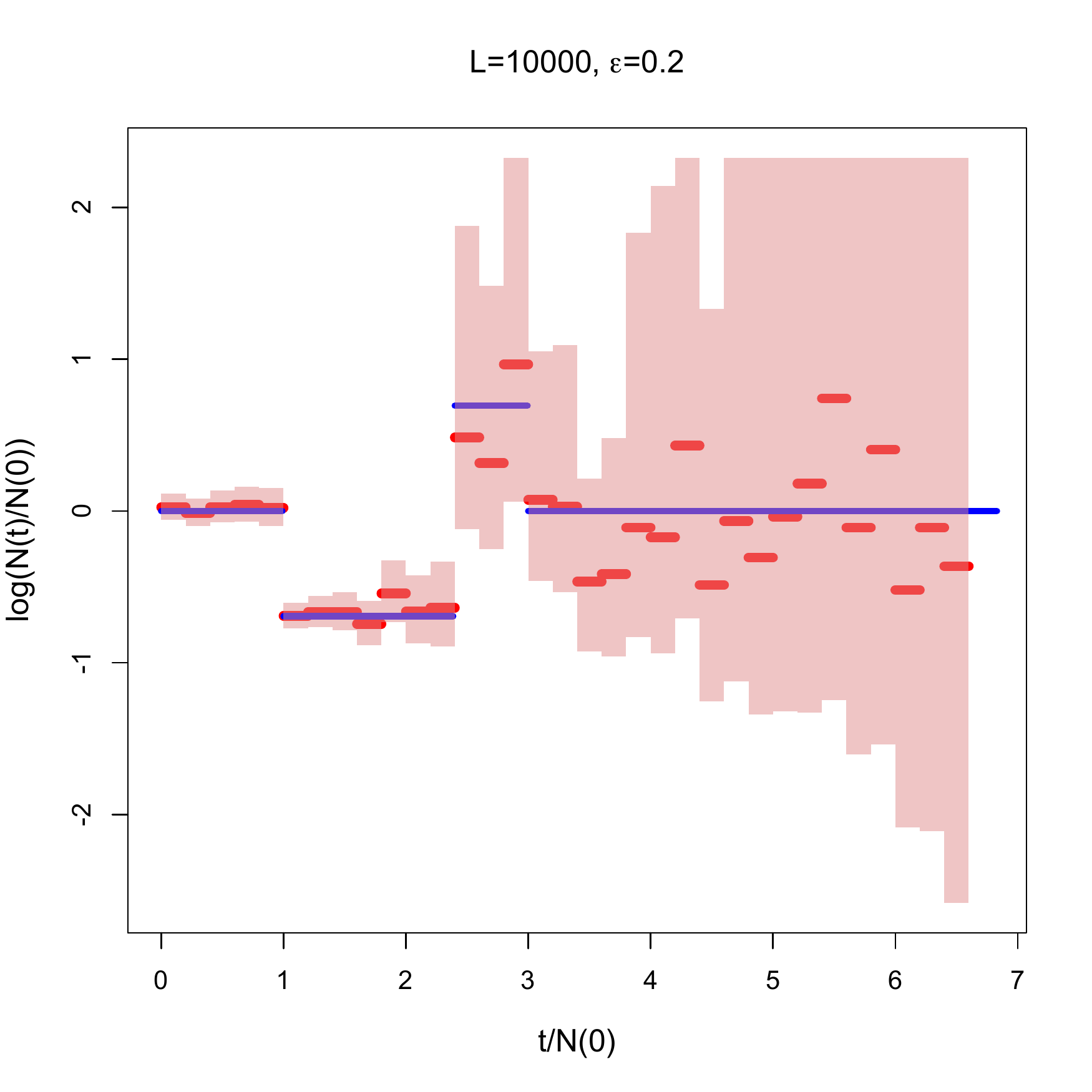}
	\caption{$L=10^4, \eps = 0.2$.}
	\label{fig:piece6}
    \end{subfigure}

    \caption{Estimating a piecewise constant population size.}
    \label{fig:piece}
\end{figure}

\begin{figure}
    \centering
    \begin{subfigure}[h]{0.4\textwidth}
	\centering
	\includegraphics[width=\textwidth]{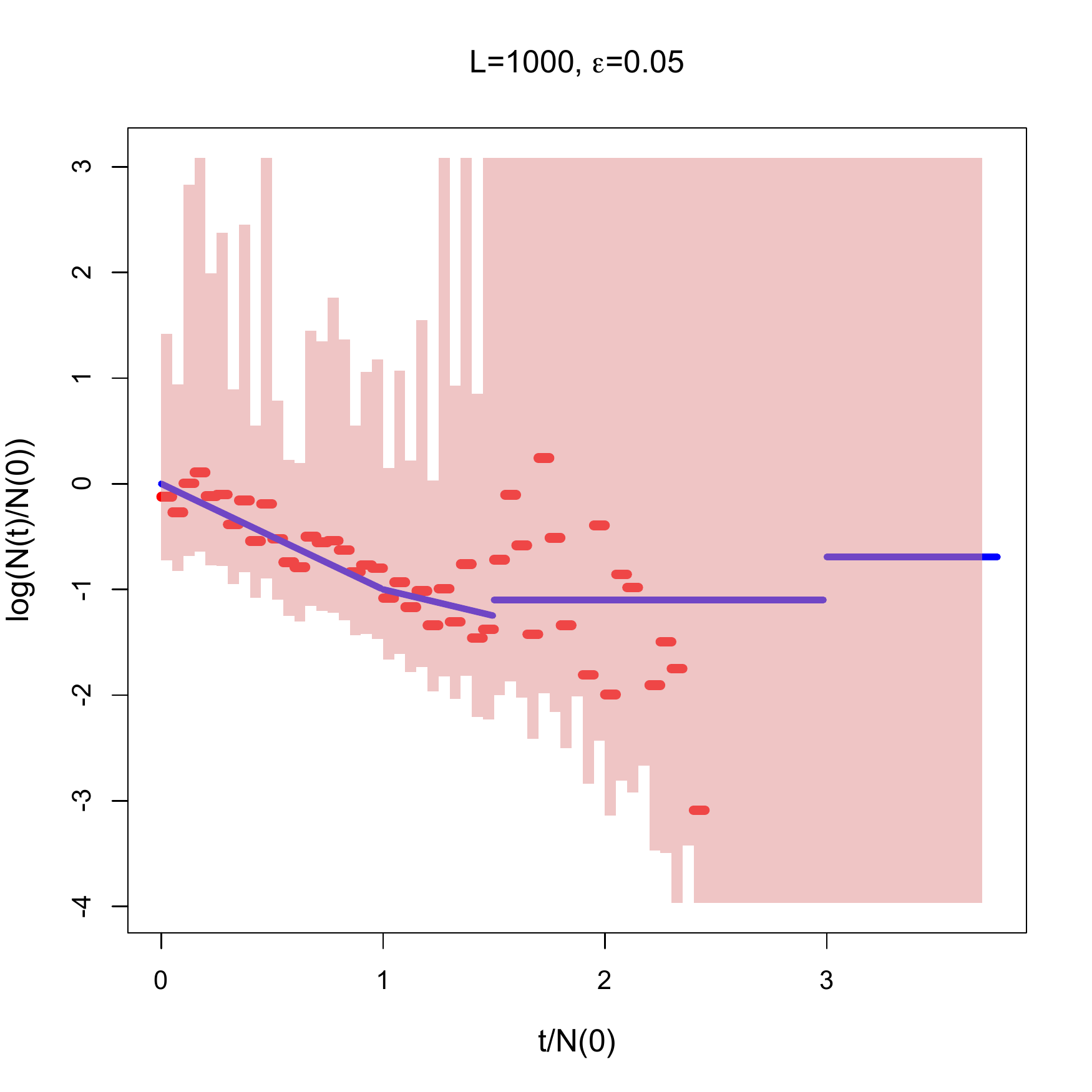}
	\caption{$L=10^3, \eps = 0.05$.}
	\label{fig:piex1}
    \end{subfigure}
    \ \ 
    \begin{subfigure}[h]{0.4\textwidth}
	\centering
	\includegraphics[width=\textwidth]{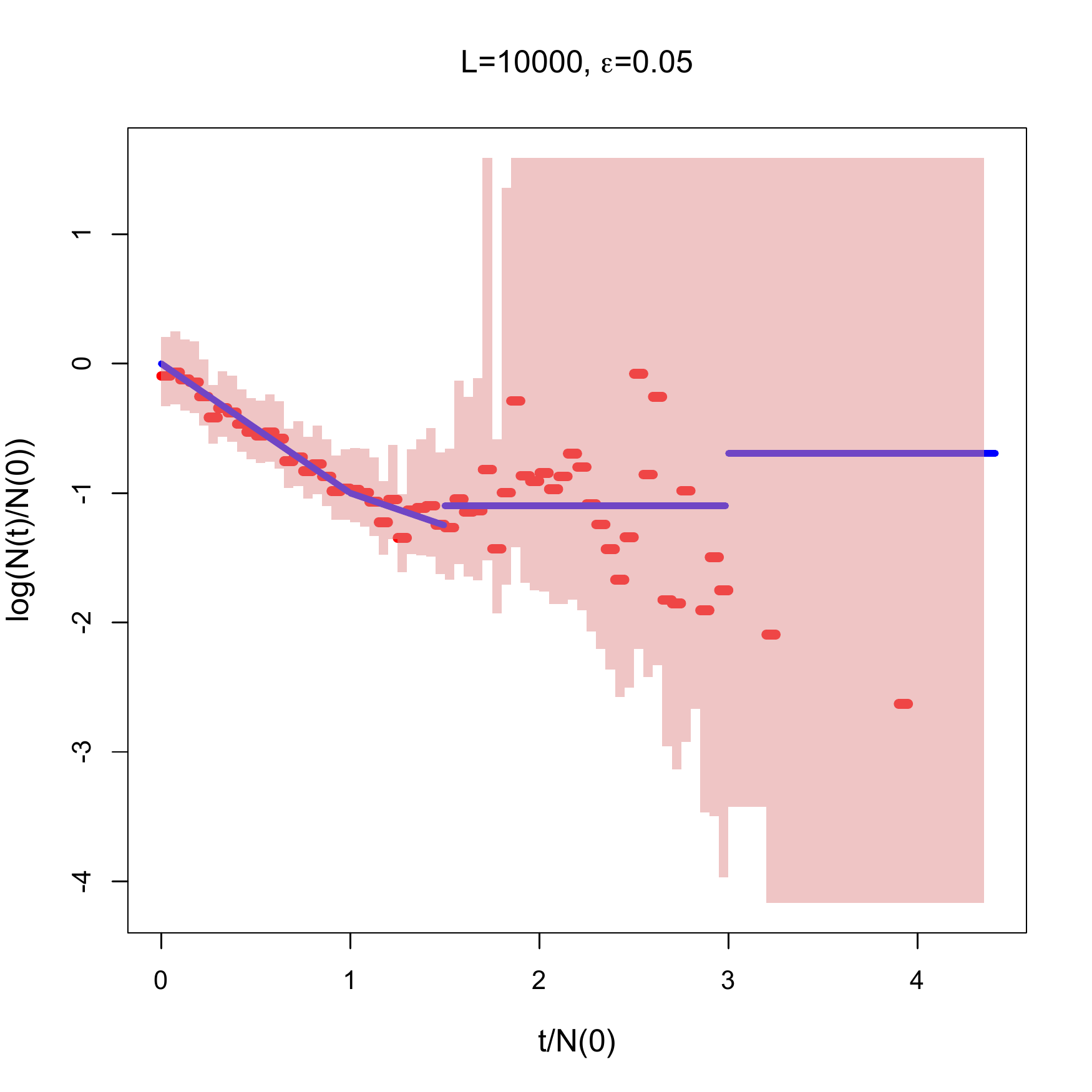}
	\caption{$L=10^4, \eps = 0.05$.}
	\label{fig:piex2}
    \end{subfigure}
    \\ 
      \begin{subfigure}[h]{0.4\textwidth}
	\centering
	\includegraphics[width=\textwidth]{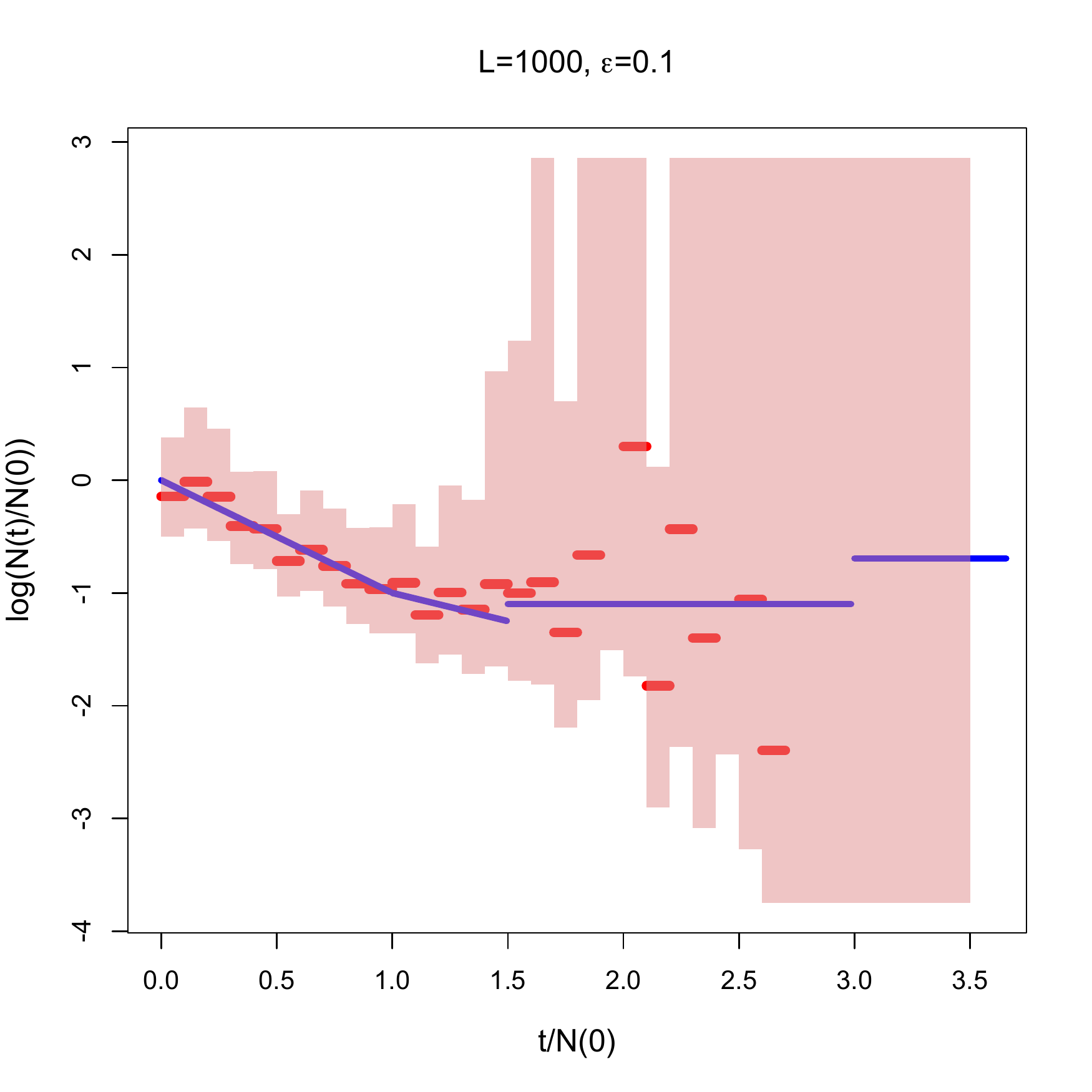}
	\caption{$L=10^3, \eps = 0.1$.}
	\label{fig:piex3}
    \end{subfigure}
    \ \ 
    \begin{subfigure}[h]{0.4\textwidth}
	\centering
	\includegraphics[width=\textwidth]{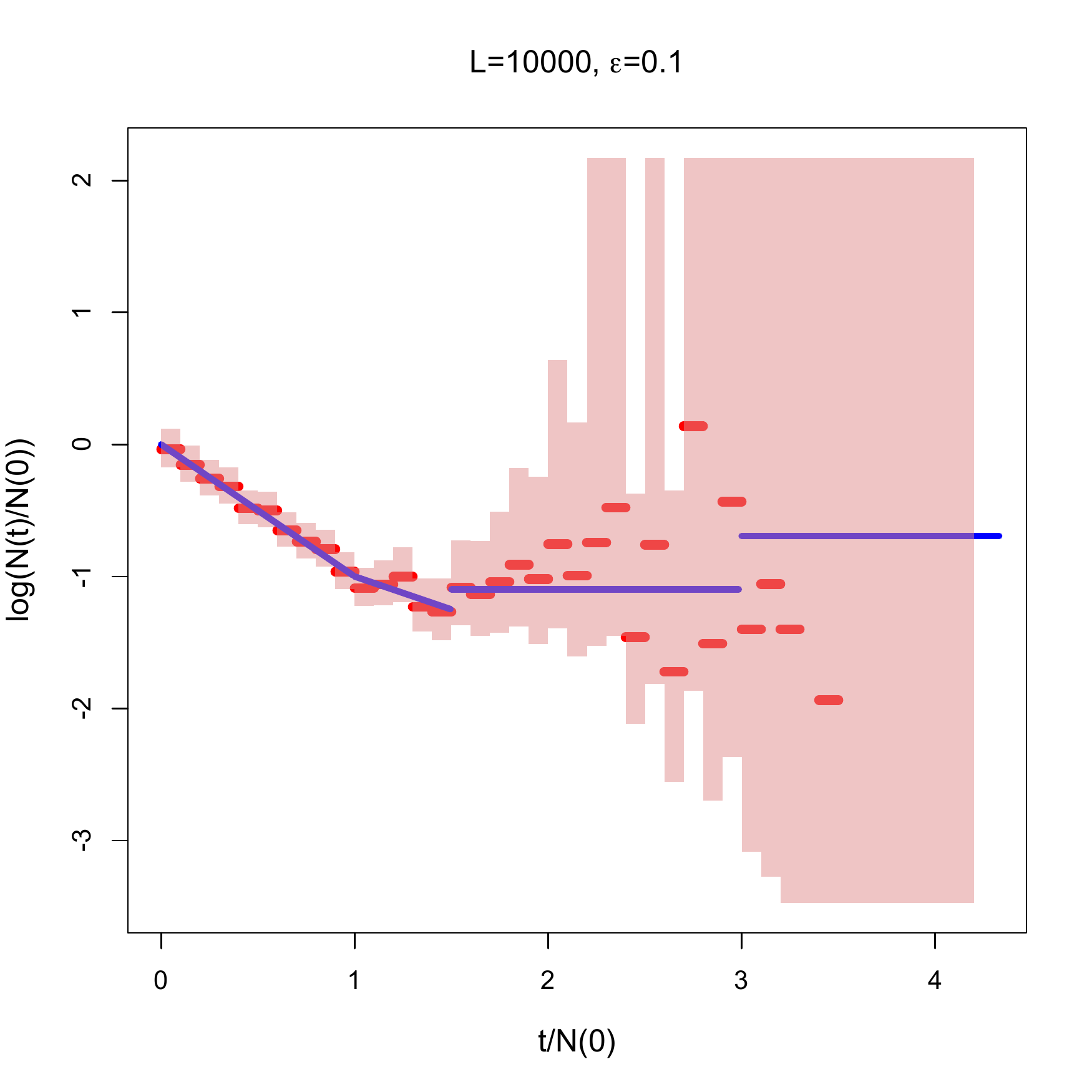}
	\caption{$L=10^4, \eps = 0.1$.}
	\label{fig:piex4}
    \end{subfigure}
    \\
      \begin{subfigure}[h]{0.4\textwidth}
	\centering
	\includegraphics[width=\textwidth]{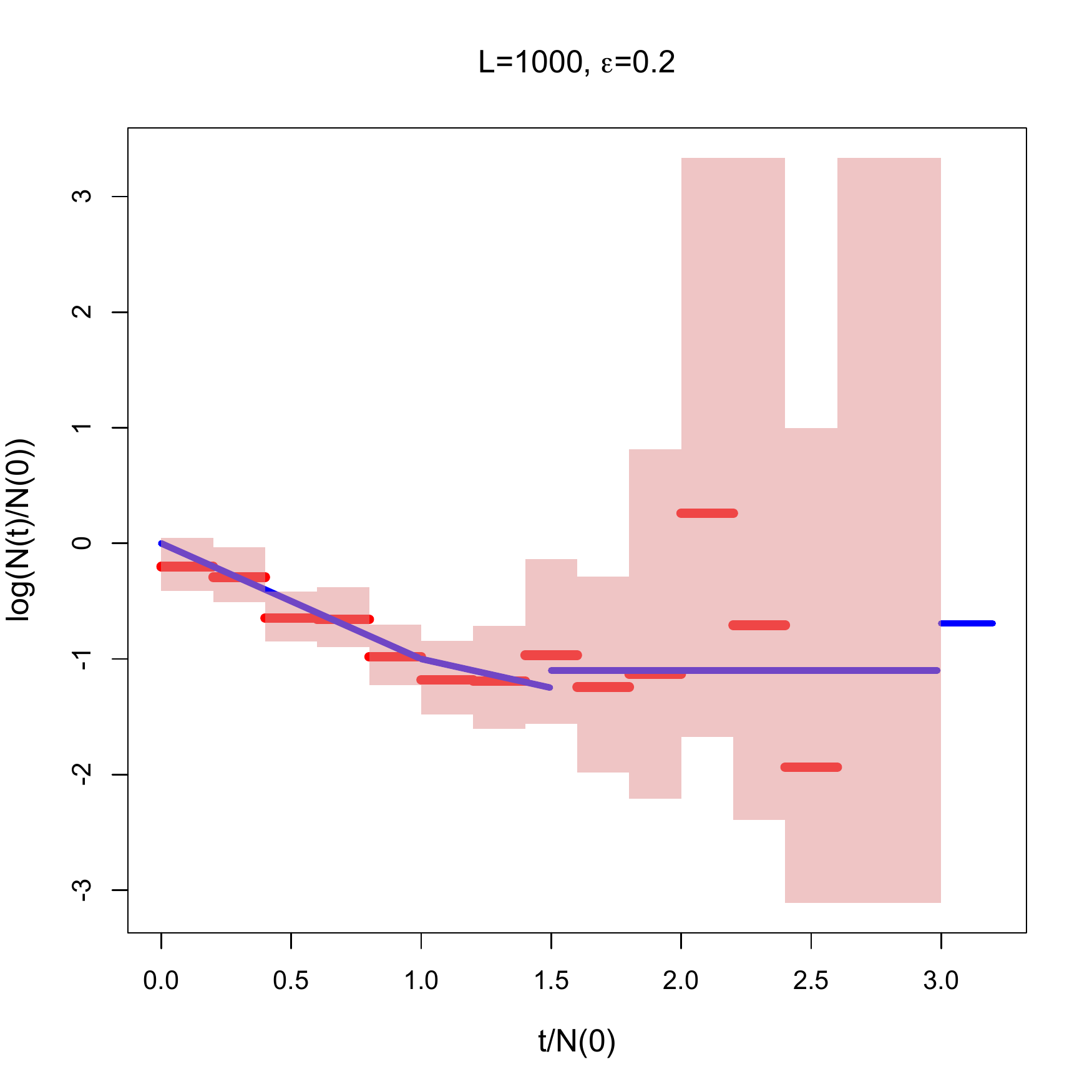}
	\caption{$L=10^3, \eps = 0.2$.}
	\label{fig:piex5}
    \end{subfigure}
    \ \ 
    \begin{subfigure}[h]{0.4\textwidth}
	\centering
	\includegraphics[width=\textwidth]{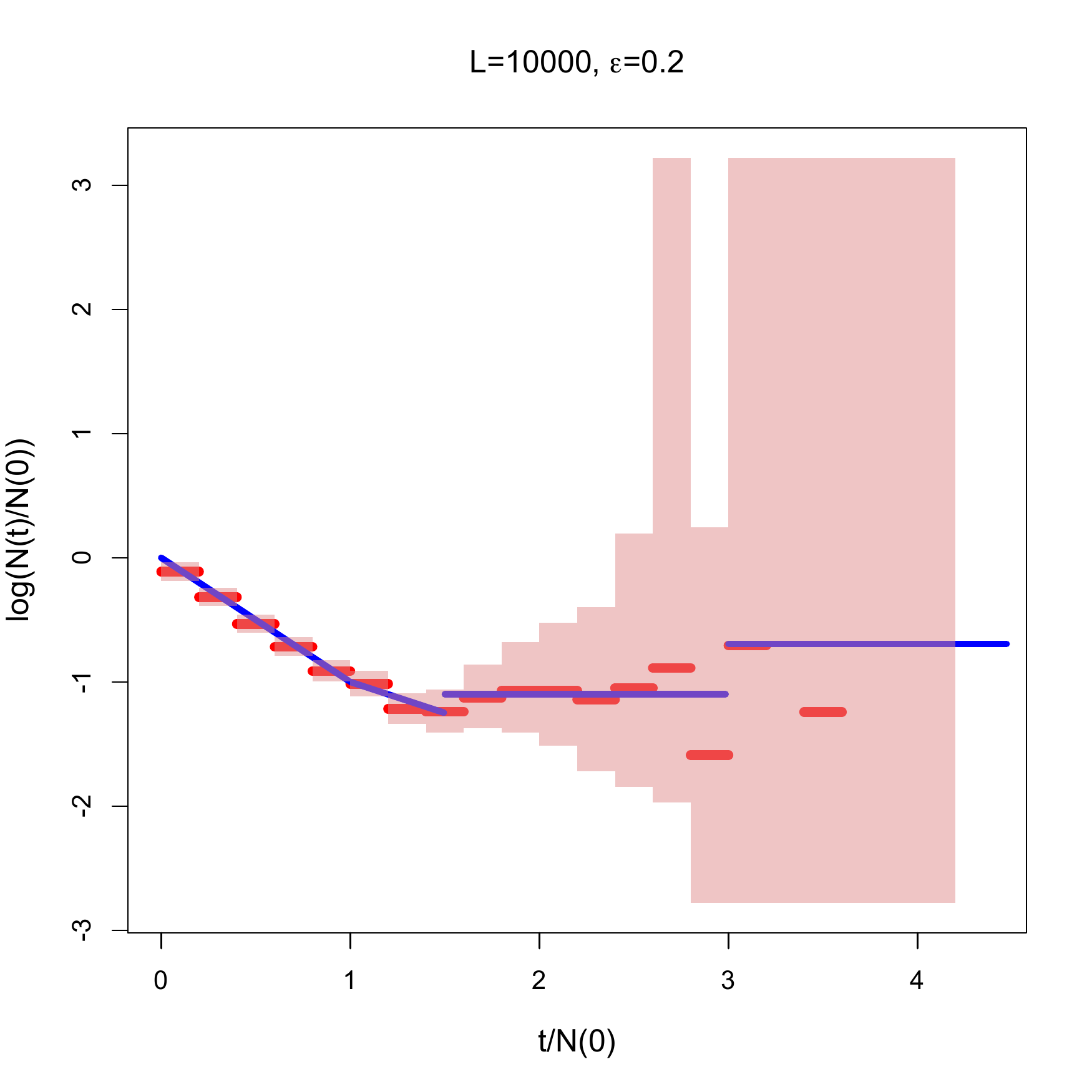}
	\caption{$L=10^4, \eps = 0.2$.}
	\label{fig:piex6}
    \end{subfigure}

    \caption{Estimating a population history with piecewise exponential change. }
    \label{fig:piex}
\end{figure}

\begin{figure}
    \centering
    \begin{subfigure}[h]{0.45\textwidth}
	\centering
	\includegraphics[width=\textwidth] {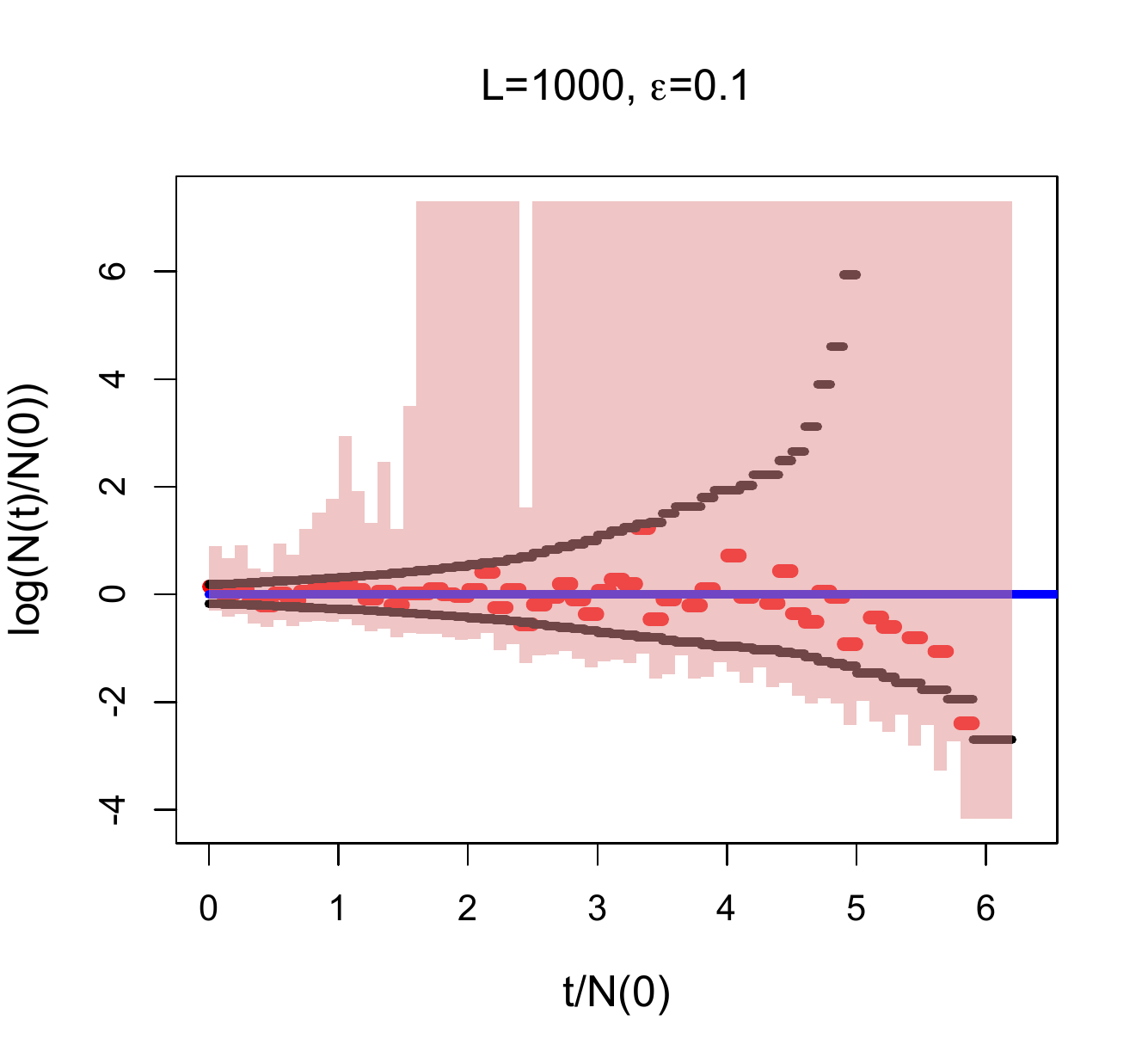}   
	\caption{$L=10^3, \eps = 0.1$.}
	\label{fig:constb1}
    \end{subfigure}
    \ \ 
    \begin{subfigure}[h]{0.45\textwidth}
	\centering
	\includegraphics[width=\textwidth]{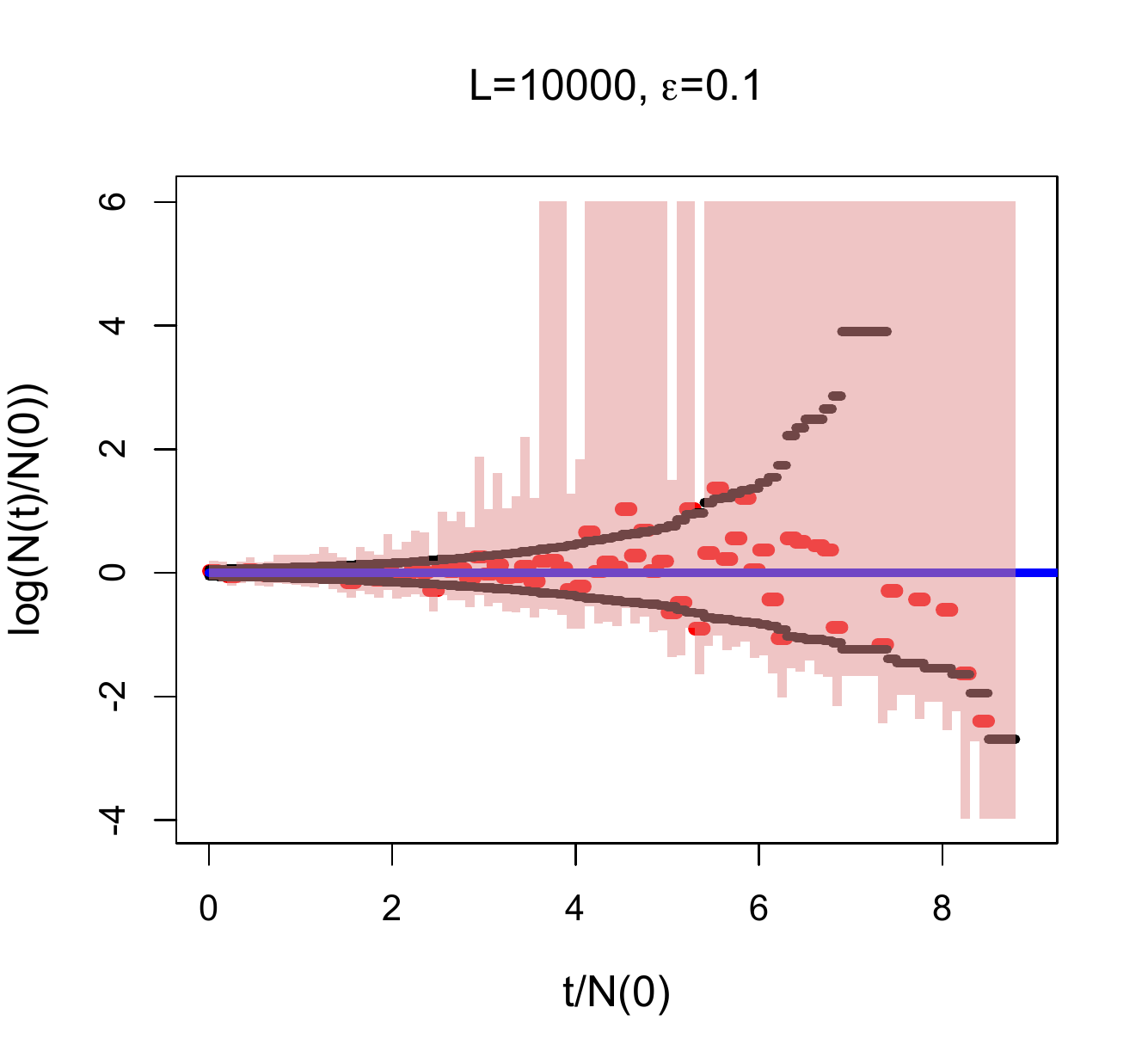}   
	\caption{$L=10^4, \eps = 0.1$.}
	\label{fig:constlb2}
    \end{subfigure}
    \\ 
      \begin{subfigure}[h]{0.45\textwidth}
	\centering
	\includegraphics[width=\textwidth]{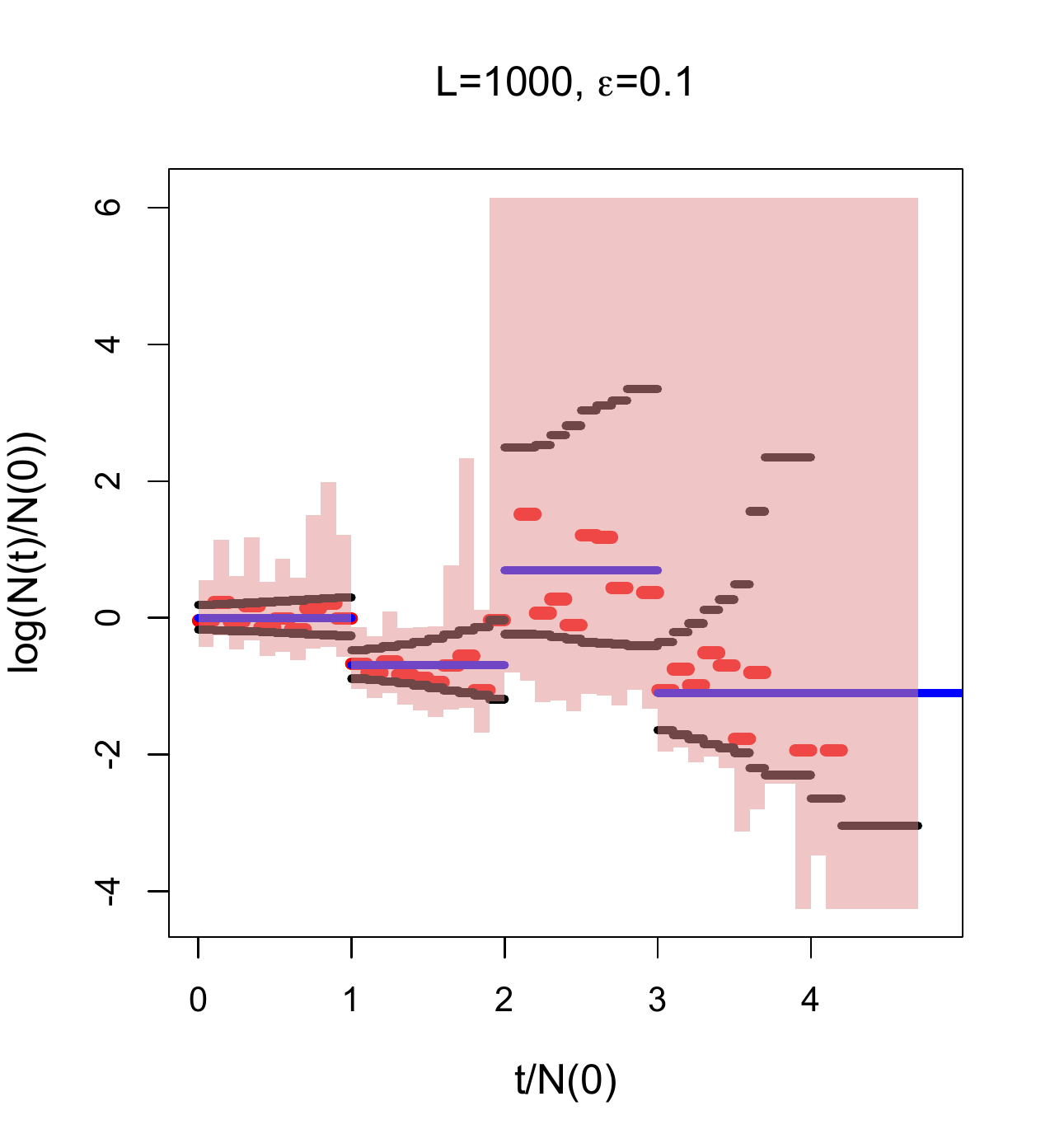}  
	\caption{$L=10^3, \eps = 0.1$.}
	\label{fig:constlb3}
    \end{subfigure}
    \ \ 
    \begin{subfigure}[h]{0.45\textwidth}
	\centering
	\includegraphics[width=\textwidth]{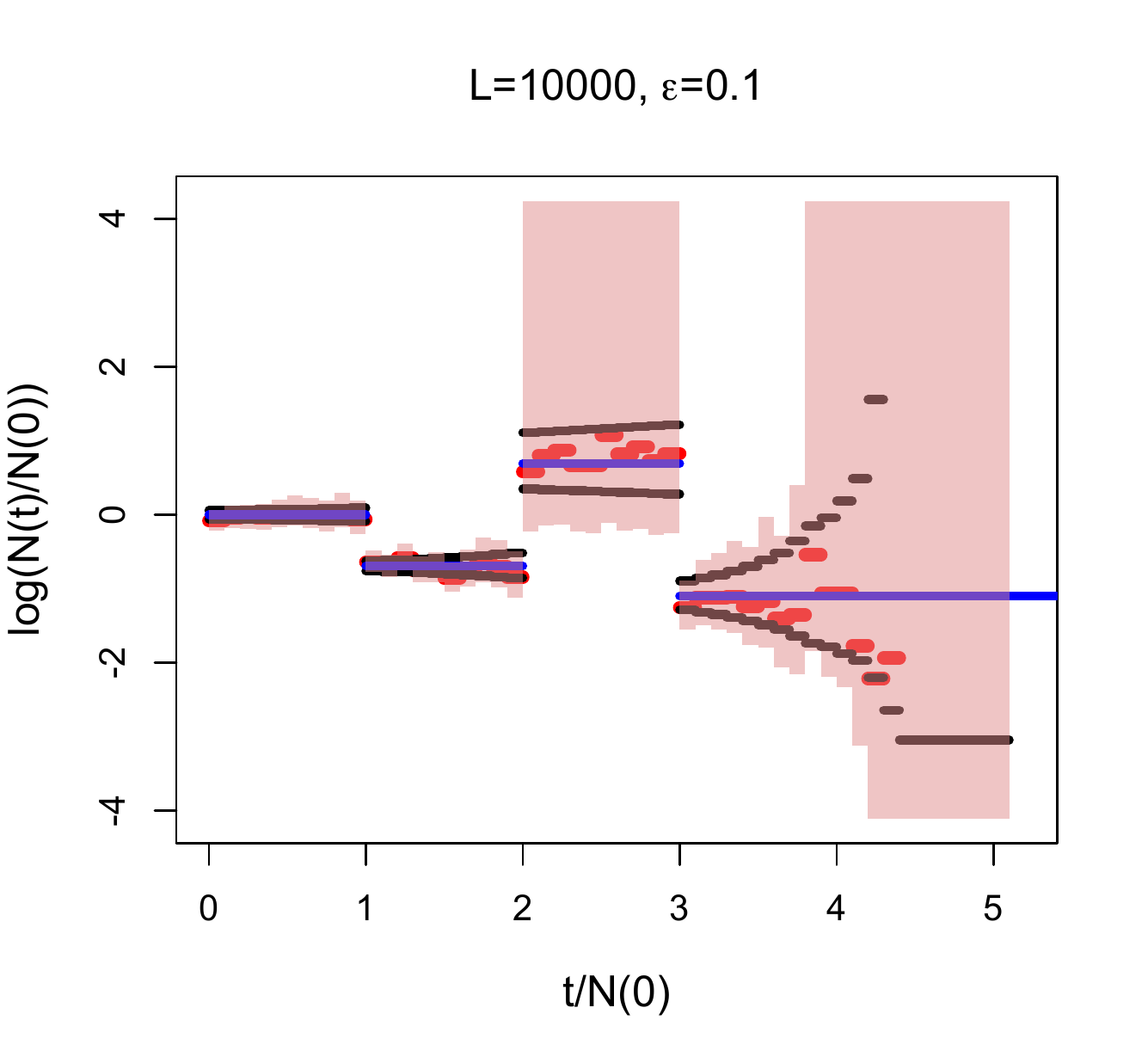}  
	\caption{$L=10^4, \eps = 0.1$.}
	\label{fig:constlb4}
    \end{subfigure}

    \caption{Constant and piecewise constant population histories with uncertainty intervals. }
    \label{fig:reconstruct}
\end{figure}

\end{document}